\newif\ifincludeproofs
\newcommand\makebig[2]{%
  \@xp\newcommand\@xp*\csname#1\endcsname{\bBigg@{#2}}%
  \@xp\newcommand\@xp*\csname#1l\endcsname{\@xp\mathopen\csname#1\endcsname}%
  \@xp\newcommand\@xp*\csname#1r\endcsname{\@xp\mathclose\csname#1\endcsname}%
}
\newtheorem{theorem}{Theorem}
\newtheorem{assumption}{Assumption}
\Crefname{assumption}{assumption}{Assumptions}
\newtheorem{lemma}{Lemma}
\newtheorem{remark}{Remark}
\DeclareMathOperator{\sign}{sign}
\DeclareMathOperator{\diag}{diag}
\title{\LARGE \bf
\ifincludeproofs
\vspace{-7mm}Formation Path Following Control of Underactuated USVs - With Proofs
\else
Formation Path Following Control of Underactuated USVs
\fi
}
\author{{\AA}smund Eek$^{1}$, Kristin Y. Pettersen$^{1, 2}$, Else-Line M. Ruud$^{2}$ and Thomas R. Krogstad$^{2}$
\thanks{*This work was partly supported by the Research Council of Norway through the Centres of Excellence funding scheme, project No. 223254 – NTNU AMOS}
\thanks{$^{1}$Centre for Autonomous Marine Operations and Systems (NTNU AMOS), Department of Engineering Cybernetics, Norwegian University of Science and Technology, NO-7491 Trondheim}%
\thanks{$^{2}$Norwegian Defence Research Establishment (FFI), P.O. Box 25, N-2027
Kjeller, Norway}%
}
\begin{document}

\maketitle
\thispagestyle{empty}
\pagestyle{empty}

\begin{abstract}

This paper proposes a formation control method for two underactuated unmanned surface vessels (USVs) to follow curved paths in the presence of ocean currents. By uniting a line-of-sight (LOS) guidance law and the null-space-based behavioral control (NSB) framework, we achieve curved path following of the barycenter, while maintaining the desired vessel formation. 

The closed-loop dynamics are investigated using cascaded systems theory, and it is shown that the closed-loop system is USGES and UGAS, while the underactuated sway dynamics remains bounded. Both simulation and experimental results are presented to verify the theoretical results.




\end{abstract}

\begingroup
\setlength{\belowdisplayskip}{0pt}
\setlength{\abovedisplayskip}{0pt}
\setlength{\belowdisplayshortskip}{\belowdisplayskip}
\section{INTRODUCTION}

In recent years, the presence of autonomous vehicles has become more prominent, with self-driving cars being one of the most well-known fields. Another active area of research is within the maritime field, with a focus on both autonomous underwater and surface vessels. A significant advantage of autonomous systems is their ability to plan and execute tasks with reduced need for human interference. Current research areas include path planning and following, collision avoidance, and maneuvering in dynamic environments under the influence of disturbances such as wind, ocean currents, and waves for applications such as transportation, seafloor mapping, and in the oil and gas industry. Traditionally, these tasks have been performed using a single vessel. 
Some tasks, however, may be too complex to be solved by an individual system or be of such a nature that multiple systems are required to cooperate. With this, new challenges arise of how path following can be achieved while maintaining a desired overall formation.

The path following control problem for single underactuated marine vessels has been considered in several publications, see for instance \cite{FredriksenPettersen2006, Borhaug2008, Caharija2012, Belleter2019, Wiig2018, Oh2010, Pedro2003}. The line-of-sight (LOS) approach, by steering the vessel towards a point ahead on the path, is widely used to solve the path following problem, due to its intuitive structure and ability to counteract environmental disturbances, e.g. \cite{Borhaug2008, Belleter2019}. When considering the control problem for formations of marine vessels, several leader-follower approaches have been proposed where the follower adapts its speed and position relative to the leader to obtain the desired formation, e.g. \cite{Kyrkjebo2007, Lapierre2003, Breivik2008, Belleter2017}. However, leader-follower methods suffer from the fact that communication is unidirectional, meaning the leader will not adapt its speed to the follower.

The problem of straight-line path following formations of marine vessels is studied in \cite{Borhaug2006, Borhaug2011, Belleter2014}. The case without ocean currents is studied in \cite{Borhaug2006, Borhaug2011}, where the desired formation is obtained through each vessel in the formation using a LOS guidance law to follow the desired path, while the desired along-path distance between each vessel is obtained with a nonlinear velocity control law. Similarly, in \cite{Belleter2014}, the guidance law is extended with an integral LOS (ILOS) controller to counteract constant irrotational ocean currents. However, all of these methods are restricted to straight-line paths.

Another approach to the formation control problem is the null-space-based (NSB) behavioral control scheme, e.g. \cite{Arrichiello2006paper, Arrichiello2010, Pereda2011}. This centralized guidance system decomposes the control objective into different prioritized tasks, which are solved independently of each other using a closed loop inverse kinematics (CLIK) algorithm. The solutions of each task are then combined by projecting the solution of one task into the null-space of the higher-priority task. Expressing the control objectives in terms of fundamental tasks simplifies the control system design, as the tasks can be designed independently and then assembled to compose a complex behavior which could be difficult to create with a single objective function.



This paper aims to extend and combine the results for curved LOS path following for single USVs from \cite{Belleter2019} with the NSB framework presented in \cite{Arrichiello2006paper, Pereda2011} by replacing the barycenter CLIK control law with a LOS guidance law, to utilize the advantages of both methods. Contrary to \cite{Belleter2019}, where relative velocities are used, we will specify the LOS guidance law in terms of absolute velocity. This allows the implementation of the guidance law using only traditional sensors for estimating absolute velocities, such as an inertial measurement unit (IMU) and a global navigation satellite system (GNSS), which are available for most vessels, without the need of expensive sensors for measuring the relative velocities, which are less commonly available.

The closed-loop stability of the proposed LOS guidance law for path following of the barycenter, combined with the surge and heading autopilots from \cite{Moe2016} is analyzed using a cascaded system approach. Using the results from \cite{Pettersen2017}, we show that the closed-loop system is USGES and UGAS, while the underactuated sway dynamics are bounded. The LOS guidance law is then integrated into the NSB framework and the theoretical results are verified through both simulations and experiments. 

The paper is organized as follows. Section II gives a mathematical description of the unmanned surface vessel (USV) model. The control objectives are formalized in Section III, while the control system is presented in Section IV. In Section V we show that the closed-loop barycenter path following system is USGES, while the underactuated sway dynamics are proven to be bounded. Section VI and Section VII present the results of the guidance law from simulations and experiments respectively. Finally, Section VIII gives the conclusion of the work. 

\vspace*{-3mm}
\section{VESSEL MODEL}
\vspace*{-1mm}
\subsection{The Vessel Model}
\vspace*{-1mm}
The state of a marine surface vessel is given by the vector $\boldsymbol{\eta} \triangleq \left[x, y, \psi \right]^T$ which describes the position and orientation w.r.t the inertial frame $i$. The vector $\boldsymbol{\nu}\triangleq \left[u, v, r\right]^T$ contains the linear and angular velocities given in the body-fixed coordinate system $b$, where $u$ is the surge speed, $v$ the sway speed and $r$ the yaw rate. 

The ocean current velocity, denoted $\mathbf{V}_c$, expressed in the inertial frame, satisfies the following assumption:
\begin{assumption}\label{ass:vessel_modeling:eom:ocean_currents}
        The ocean current in the inertial frame is assumed to be constant and irrotational, i.e. $\mathbf{V}_c\triangleq\left[V_x, V_y, 0\right]^T$. Furthermore, there exists a constant $V_\text{max}>0$ such that $\lVert\mathbf{V}_c\rVert = \sqrt{V_x^2 + V_y^2} < V_\text{max}$, i.e. the ocean current is bounded.
\end{assumption}
Moreover, $\boldsymbol{\nu}_r \triangleq \boldsymbol{\nu} - \boldsymbol{\nu}_c$ is the relative velocity of the vessel, where $\mathbf{v}_c \triangleq\left[u_c, v_c, 0\right]^T$ is the ocean current velocity expressed in the body-fixed frame $b$ and obtained from $\mathbf{v}_c = \mathbf{R}(\psi)\mathbf{V}_c$, where $\mathbf{R}(\psi)$ is the rotation matrix from $b$ to $i$, defined as:
\begingroup
\setlength{\belowdisplayskip}{1pt}
\begin{equation}
        \mathbf{R}(\psi) = \begin{bmatrix}
                \cos(\psi) & -\sin(\psi)          & 0 \\
                \sin(\psi)          & \cos(\psi) & 0 \\
                0          & 0 & 1
                \end{bmatrix}.
\end{equation}
\endgroup
The kinematics and dynamics of the marine vessel are described by the 3-DOF maneuvering model \cite{Fossen2011,Borhaug2008,Moe2016}:
\vspace*{-3mm}
\begingroup
\setlength{\jot}{0.1ex}
\begin{subequations}
        \begin{align}
                \dot{\boldsymbol{\eta}} &= \mathbf{R}(\psi) \boldsymbol{\nu} \label{eq:vessel_modeling:eom:kinematic}\\
                \mathbf{M}_\text{RB}\dot{\boldsymbol{\nu}} + \mathbf{C}_\text{RB}(\boldsymbol{\nu})\boldsymbol{\nu} &= -\mathbf{M}_\text{A}\dot{\boldsymbol{\nu}}_r - \mathbf{C}_\text{A}(\boldsymbol{\nu}_r)\boldsymbol{\nu}_r \nonumber\\
                &\hphantom{{}=}- \mathbf{D}(\boldsymbol{\nu}_r) \boldsymbol{\nu}_r  + \mathbf{B}\mathbf{f}\label{eq:vessel_modeling:eom:kinetic}.
        \end{align}\label{eq:vessel_modeling:eom}%
\end{subequations}
\endgroup
The vector $\mathbf{f} \triangleq \left[T, \delta\right]^T$ is the control input vector, containing the surge thrust $T$ and the rudder angle $\delta$, respectively. The matrix $\mathbf{M}_\text{RB} = \mathbf{M}_\text{RB}^T>0$ is the rigid-body mass and inertia matrix, $\mathbf{C}_\text{RB}$ is the rigid-body Coriolis and centripetal matrix, $\mathbf{M}_\text{A} = \mathbf{M}_\text{A}^T>0$ is the hydrodynamic added mass matrix and $\mathbf{C}_\text{A}$ is the added mass Coriolis and centripetal matrix. Furthermore, $\mathbf{D}$ is the hydrodynamic damping matrix, and $\mathbf{B}\in\mathbb{R}^{3\times2}$ is the actuator configuration matrix.
\begin{assumption}\label{ass:model:symmetry}
        The USV is port-starboard symmetric.
\end{assumption}
\begin{assumption}\label{ass:model:xg}
        The body-fixed coordinate system is located at a distance $(x_g^*, 0)$ from the USV's center of gravity along the center-line of the vessel.
\end{assumption}
The matrices can be defined as
\begingroup
\setlength{\abovedisplayskip}{1pt}
\allowdisplaybreaks
\begin{gather}
        \mathbf{M}_x \triangleq \begin{bmatrix}
                m_{11}^x & 0         & 0 \\
                0          & m_{22}^x & m_{23}^x \\
                0          & m_{23}^x & m_{33}^x
                \end{bmatrix},\, 
        \mathbf{B} \triangleq \begin{bmatrix}
                b_{11} & 0  \\
                0          & b_{22} \\
                0          & b_{23}
                \end{bmatrix}\nonumber\\
        \mathbf{C}_x(\mathbf{z}) \triangleq \begin{bmatrix}
                        0 & 0         & -m_{22}^x z_2 - m_{23}^x z_3  \\
                        0          & 0 & m_{11} z_1 \\
                        m_{22}^x z_2 + m_{23}^x z_3        & -m_{11} z_1 & 0
                \end{bmatrix}\nonumber\\
        \mathbf{D}(\boldsymbol{\nu}_r) \triangleq \begin{bmatrix}
                d_{11} + d_{11}^q u_r & 0         & 0 \\
                0          & d_{22}^x & d_{23}^x \\
                0          & d_{32}^x & d_{33}^x
                \end{bmatrix},
\end{gather}
\endgroup
for $x\in\left\{RB, A\right\}$. The structure of $\mathbf{M}_x$ and $\mathbf{D}$ follows from \Crefrange{ass:model:symmetry}{ass:model:xg} and the structure of $\mathbf{C}_x$ is parametrized accordingly to \cite{Fossen2011}. To separate the sway-yaw subsystem, such that the yaw control does not affect the sway motion, the distance $x_g^*$ from Assumption \ref{ass:model:xg} is chosen such that $\mathbf{M}^{-1}\mathbf{B}\mathbf{M} = \left[\tau_u, 0, \tau_r\right]^T$, where $\mathbf{M} = \mathbf{M}_{RB} + \mathbf{M}_A$. Such a transformation does always exist for port-starboard symmetric vessels \cite{FredriksenPettersen2006}.

\subsection{Vessel Model in Component Form}
The model can be written in component form as
\begin{subequations}
        \begin{align}
                \dot{x} &= \cos(\psi)u - \sin(\psi)v\\
                \dot{y} &= \sin(\psi)u + \cos(\psi)v \\
                \dot{\psi} &= r \\
                \dot{u} &= - \frac{d_{11} + d_{11}^q u}{m_{11}} u + \frac{(m_{22}v + m_{23}r)}{m_{11}}r \nonumber\\
                &\hphantom{{}=} + \boldsymbol{\phi}_u^T(\psi, r)\boldsymbol{\theta}_u + \tau_u\\
                \dot{v} &= X(u, u_c)r + Y(u, u_c)v_r\label{EQ:NSB:ANALYSIS:VESSEL_MODEL:sway}\\
                \dot{r} &= F_r(u, v, r) + \boldsymbol{\phi}_r^T(u, v, r, \psi)\boldsymbol{\theta}_r + \tau_r,
        \end{align}\label{eq:model:component_form}%
\end{subequations}
where $m_{ji} \triangleq m_{ji}^{RB} + m_{ji}^A$ and $\boldsymbol{\theta}_u = \boldsymbol{\theta}_r = \left[V_x, V_y, V_x^2, V_y^2, V_x V_y\right]^T$ and the expressions for $\boldsymbol{\phi}_u^T(\psi, r)$, $X(u, u_c)$, $Y(u, u_c)$, $F_r(u, v, r)$ and $\boldsymbol{\phi}_r^T(u, v, r, \psi)$ are given in Appendix A. The functions $X(u, u_c)$ and $Y(u, u_c)$ are bounded for bounded inputs, and the following condition holds true for $Y(u, u_c)$:
\begin{assumption}\label{ass:model:Ymin}
        The function $Y(u, u_c)$ satisfies
        \begin{equation}
                Y(u, u_c) \leq - Y_\text{min} < 0,\quad \forall u\in \left[0, U_d\right].
        \end{equation}
\end{assumption}

\section{CONTROL OBJECTIVES}\label{sec:control_objectives}
The control objective is to make two underactuated USVs perform curved path following while aligning themselves such that the vector between them is perpendicular to the path with a desired inter-vessel distance. We propose a method where curved LOS path following is combined with the NSB framework, as it is proven in \cite{Belleter2019} that underactuated UAVs are able to counteract constant ocean currents with LOS guidance for curved paths. Moreover, the NSB framework simplifies the control system design by splitting the problem into fundamental tasks. Specifically, we choose the three following tasks, sorted by priority: collision avoidance, vessel formation and barycenter path following, where the first two tasks are defined according to \cite{Pereda2011}. 



To solve the third task, the objective of the control system is to make the barycenter of the two vessels converge to and follow a given smooth path $P$ while maintaining a desired total speed $U_d = \sqrt{u_d^2 + v^2}$ tangential to the path in the presence of unknown constant irrotational ocean currents. The path $P$ is parametrized using a path variable $\theta\in\mathbb{R}$ with respect to the inertial frame. Moreover, for each point on the path, $\left(x_p(\theta), y_p(\theta)\right)\in P$, a path tangential frame is introduced, see \Cref{fig:nsb:tasks:barycenter:problem_definition:path}. Using these definitions, the path following errors $\mathbf{p}_{pb}^p\triangleq[x_{pb}^p,y_{pb}^p]^T$ expressed in the path tangential frame is found to be 
\begin{equation}
        \begin{bmatrix}x_{pb}^p\\y_{pb}^p\end{bmatrix}\!=\!\begin{bmatrix}
    \cos\left(\gamma_p(\theta)\right) & -\sin\left(\gamma_p(\theta)\right) \\
    \sin\left(\gamma_p(\theta)\right) & \cos\left(\gamma_p(\theta)\right)
    \end{bmatrix}^T\!\begin{bmatrix}x_b - x_p(\theta)\\y_b - y_p(\theta)\end{bmatrix}, \label{eq:nsb:tasks:barycenter:problem_definition:path_following_errors}
\end{equation}
where $\gamma_p(\theta)$ is the path tangential angle. Hence, the task errors $x_{pb}^p$ and $y_{pb}^p$ express the position of the barycenter along the path frame tangential and orthogonal axis respectively. The barycenter path following objective is thus fulfilled if the trajectory of both vessels makes $x_{pb}^p$ and $y_{pb}^p$ converge to zero. 
\vspace*{-2.5mm}
\begin{figure}[htbp]
        \centering
        \begin{tikzpicture}[
            transform shape,
            dot/.style={circle,inner sep=1pt,fill},
            link/.style={distance=1.5mm,very thick, blue},
            boat/.style={distance=1.5mm,very thick, black},
            tangent/.style={
            decoration={
                markings,
                mark=
                    at position #1
                    with
                    {
                        \coordinate (tangent point-\pgfkeysvalueof{/pgf/decoration/mark info/sequence number}) at (0pt,0pt);
                        \coordinate (tangent unit vector-\pgfkeysvalueof{/pgf/decoration/mark info/sequence number}) at (1,0pt);
                        \coordinate (tangent orthogonal unit vector-\pgfkeysvalueof{/pgf/decoration/mark info/sequence number}) at (0pt,1);
                        \coordinate (tangent negative orthogonal unit vector-\pgfkeysvalueof{/pgf/decoration/mark info/sequence number}) at (0pt,-1);
                    }
            },
            postaction=decorate
        },
        use tangent/.style={
            shift=(tangent point-#1),
            x=(tangent unit vector-#1),
            y=(tangent orthogonal unit vector-#1)
        },
        use tangent/.default=1]
            
        \draw[ tangent=0.4, tangent=0.56, thick] plot [smooth] coordinates { (0,0) (2,1) (4,1) (5,1.5)};
        \coordinate[use tangent] (p) at (0,0);
        \coordinate (yaxis) at ($(p) + (0,1)$);
        \coordinate (xaxis) at ($(p) + (1,0)$);
        \coordinate[use tangent] (p_n) at (0,-1); 
        \coordinate (p_b) at (4,-0.5);
        \coordinate (p_b_pro) at ($(p)!(p_b)!(tangent negative orthogonal unit vector-1)$);
        
        \node[dot, label=$\mathbf{p}_b$] at (p_b) {};
        \node[label=$P$] at (5,1.5) {};
        \node[label={[xshift=1mm, yshift=-2mm]135:$\left(x_p(\theta)\text{,}\, y_p(\theta)\right)$}] at (p) {};

        \draw[dashed, -{Latex[length=2mm]}, thick] (p) -- (yaxis);
        \draw[dashed, -{Latex[length=2mm]}, thick] (p) -- (xaxis);
        \draw[-{Latex[length=2mm]}, use tangent, thick] (tangent point-1) -- node[at end, right] {$T$} (tangent unit vector-1);
        \draw[-{Latex[length=2mm]}, use tangent, thick] (tangent point-1) -- node[at end, right] {$N$} (0,-1);
        
        \draw[densely dashed, thick] (p) -- node [left, at end] {$y_{pb}^p$} (p_b_pro);
        \draw[densely dashed, thick] (p_b_pro) -- node [below] {$x_{pb}^p$}(p_b);

        \pic [draw, {Latex[length=1mm]}-, angle radius=6mm, angle eccentricity=1.2, right, "$\gamma_p(\theta)$", thick] {angle = tangent unit vector-1--tangent point-1--yaxis};
        
        \coordinate (origin) at (-1, -1.5);
        \coordinate (x) at (-1, 2);
        \coordinate (y) at (5, -1.5);
        \coordinate (x_b_pro) at ($(origin)!(p_b)!(x)$);
        \coordinate (y_b_pro) at ($(origin)!(p_b)!(y)$);
        \draw[-{Latex[length=3mm]}, thick] (origin) -- node [right, at end] {$X$} (x);
        \draw[-{Latex[length=3mm]}, thick] (origin) -- node [above, at end] {$Y$} (y);
        \draw[densely dotted, thick] (x_b_pro) -- node [above] {$x_b$}(p_b);
        \draw[densely dotted, thick] (y_b_pro) -- node [right] {$y_b$}(p_b);

        \end{tikzpicture}
        \caption{Definition of the path and barycenter path following errors.}
        \label{fig:nsb:tasks:barycenter:problem_definition:path}
    \end{figure}


\vspace*{-5mm}
\section{CONTROL SYSTEM}
In this section, we first present the surge and yaw autopilots. Then, in \cref{subsec:control_system:nsb_tasks}, the NSB approach for generating the references for the autopilots is presented. Specifically, the high-priority NSB tasks of collision avoidance and vessel formation, and the corresponding CLIK algorithm and transformation to autopilot references, are presented. In \crefrange{subsec:control_system:barycenter_kinematics}{subsec:control_system:guidance_law}, the LOS guidance law for solving the third task, i.e. barycenter path following, is presented. The solutions of the barycenter path following task, $\mathbf{v}_{d, 3}$, are integrated into the NSB framework by projecting the task solutions onto the null-spaces of the higher-priority tasks, removing the components from lower-priority tasks that would conflict with the higher-priority tasks. The combined solutions from the NSB framework, $\mathbf{v}_\text{NSB}$, are then used to generate references for the autopilots.

\vspace*{-2mm}
\subsection{Surge and Yaw Controllers}
To control the surge and yaw states to their desired references, we will use the same autopilots as in \cite{Moe2016} due to their ocean current adaptation capabilities. Defining the error states
\begingroup
\setlength{\abovedisplayshortskip}{-1pt}
\setlength{\belowdisplayskip}{1pt}
\setlength{\jot}{0.1ex}
\begin{subequations}
    \begin{align}
        \tilde{u} &= u - u_d\\
        \tilde{\psi} &= \psi - \psi_d\\
        \dot{\tilde{\psi}} &= \dot{\psi} - \dot{\psi}_d\\
        \boldsymbol{\xi} &= \left[\tilde{u}, \tilde{\psi}, \dot{\tilde{\psi}}\right]^T,
    \end{align}\label{eq:nsb:analysis:error_states}%
\end{subequations}
\endgroup%
the following adaptive feedback linearizing PD-controller with sliding-mode is used to ensure tracking of the desired heading
\begingroup
\allowdisplaybreaks
\setlength{\jot}{0.1ex}
\begin{subequations}
    \begin{align}
        \tau_r &= -F_r(u, v, r) - \boldsymbol{\phi}_r^T(u, v, r, \psi)\hat{\boldsymbol{\theta}}_r + \ddot{\psi}_d\nonumber\\
        &- (k_\psi + \lambda k_r)\tilde{\psi} - (k_r + \lambda)\dot{\tilde{\psi}} - k_d\sign\left(\dot{\tilde{\psi}} + \lambda \tilde{\psi}\right)\\
        \dot{\hat{\boldsymbol{\theta}}}_r &= \gamma_r\boldsymbol{\phi}_r^T(u, v, r, \psi)\left(\dot{\tilde{\psi}} + \lambda \tilde{\psi}\right),
    \end{align}\label{eq:nsb:analysis:heading_controller}%
\end{subequations}
\endgroup
where the gains $k_\psi, k_r, \lambda, \gamma_r$ are constant and positive and the function $\sign(x)$ returns $1, 0$ and $-1$ when $x$ is positive, zero and negative, respectively. Further, a combined feedback linearizing and sliding-mode P-controller is used to track the desired surge speed
\begingroup
\allowdisplaybreaks
\setlength{\jot}{0.1ex}
\begin{subequations}
    \begin{align}
        \tau_u &= -\frac{1}{m_{11}}\left(m_{22}v + m_{23}r\right)r + \frac{d_{11}}{m_{11}}u_d - \boldsymbol{\phi}_u^T(\psi, r)\hat{\boldsymbol{\theta}}_u\nonumber\\
        &\hphantom{{}=}  + \dot{u}_d+ \frac{d_{11}^q}{m_{11}}u^2 - k_u\tilde{u} - k_e\sign(\tilde{u})\\
        \dot{\hat{\boldsymbol{\theta}}}_u &= \gamma_u\boldsymbol{\phi}_u^T(\psi, r)\tilde{u}.
    \end{align}\label{eq:nsb:analysis:surge_controller}%
\end{subequations}
\endgroup

\vspace*{-5mm}
\subsection{NSB Collision Avoidance and Vessel Formation Tasks}\label{subsec:control_system:nsb_tasks}
For each task, we define a task variable $\boldsymbol{\sigma}\in\mathbb{R}^m$:
\begin{equation}
    \boldsymbol{\sigma} = \mathbf{f}(\mathbf{p}), \label{eq:nsb:math:task_function}
\end{equation}
where $\mathbf{p} = [\mathbf{p}_1^T,\mathbf{p}_2^T]^T\in \mathbb{R}^{4}$ is the concatenated vector of system configurations, $\mathbf{p}_i \in\mathbb{R}^2$ is the position of vessel $i$, expressed in the inertial frame and $\mathbf{f}\colon\mathbb{R}^{4}\to\mathbb{R}^m$ is the task function which maps the system configuration to the task variable, see \cite{Arrichiello2006paper}. To track the desired task reference, $\boldsymbol{\sigma}_d(t)$, we use the CLIK algorithm presented in \cite{Arrichiello2006paper}
\begin{equation}
    \mathbf{v}_d = \mathbf{J}^\dagger \left(\dot{\boldsymbol{\sigma}}_d + \boldsymbol{\Lambda} \tilde{\boldsymbol{\sigma}}\right) \in \mathbb{R}^{2}, \label{eq:nsb:math:velocity_closed_loop}
\end{equation}
where $\tilde{\boldsymbol{\sigma}} = \boldsymbol{\sigma}_d - \boldsymbol{\sigma}$ is the task error, $\mathbf{J}\in\mathbb{R}^{m\times 4}$ is the configuration-dependent task Jacobian matrix, $(\cdot)^\dagger$ denotes the Moore-Penrose pseudoinverse and $\boldsymbol{\Lambda}\in \mathbb{R}^{m\times m}>0$ is a matrix of proportional gains. Now, let $\mathbf{v}_{d, i}$ denote the solution of the i\textsuperscript{th} priority task, given by \eqref{eq:nsb:math:velocity_closed_loop}. The velocities of each task are then combined by
\begin{equation}
    \mathbf{v}_\text{NSB} = \mathbf{v}_{d, 1} + \big(\mathbf{I} - \mathbf{J}_{1}^\dagger \mathbf{J}_{1}\big) \big[\mathbf{v}_{d,2} + \big(\mathbf{I} - \mathbf{J}_2^\dagger \mathbf{J}_2\big)\mathbf{v}_{d,3}\big], \label{eq:nsb:math:merging_tasks:projection}
\end{equation}
where $\mathbf{I}$ are the identity matrices of appropriate dimensions. The desired NSB velocity is decomposed into surge and yaw references by extending the method proposed in \cite{Arrichiello2006paper} with sideslip compensation, omitting vessel subscripts:
\begingroup
\setlength{\jot}{0.2ex}
\begin{align}
    u_d &= U_\text{NSB} \frac{1 + \cos\left(\chi_\text{NSB} - \chi\right)}{2} \label{eq:nsb:interface:converting:surge} \\
    \psi_d &= \chi_\text{NSB} - \underbrace{\arctan\left(\frac{v^b}{u_d}\right)}_{\beta_d}, \label{eq:nsb:interface:converting:heading}
\end{align}
\endgroup
where $U_\text{NSB}$ and $\chi_\text{NSB}$ are the norm and direction of $\mathbf{v}_\text{NSB}$, respectively, and $\chi$ is the course of the vessel. The second term of \eqref{eq:nsb:interface:converting:heading} is the desired sideslip angle for each vessel, to make the vessel's course parallel to the path when the vessel's sway speed is non-zero.

Specifically, the collision avoidance task for the i\textsuperscript{th} vessel is defined as 
\begingroup
\setlength{\abovedisplayshortskip}{-1pt}
\begin{equation}
    \sigma_{ca} = \lVert\mathbf{p}_i - \mathbf{p}_o\rVert\in\mathbb{R}, \label{eq:control_objectives:collision_avoidance_objective}
\end{equation}
\endgroup
where $\mathbf{p}_i$, $\mathbf{p}_o\in\mathbb{R}^2$ are the positions of the i\textsuperscript{th} vessel and the other vesssel, respectively, expressed in the inertial frame. The task is only activated when the inter-vessel distance is below a certain threshold, i.e. $\sigma_{ca} < \sigma_{ca, d}$. As \eqref{eq:control_objectives:collision_avoidance_objective} is scalar, the collision avoidance gain in \eqref{eq:nsb:math:velocity_closed_loop} is reduced to $\lambda_{ca}\in\mathbb{R}$. Moreover, the vessel formation task is defined as
\begin{equation}
    \boldsymbol{\sigma}_f = \lVert \mathbf{p}_1 - \mathbf{p}_b\rVert\in\mathbb{R}^2,\label{eq:control_objectives:formation_objective}
\end{equation}
where $\mathbf{p}_1$, $\mathbf{p}_b = [x_b, y_b]^T$ are the positions of the first vessel and the barycenter position, to be defined later, respectively, expressed in the inertial frame. To fulfill the vessel formation control objective, the desired task function value, $\boldsymbol{\sigma}_{f, d}^p$, is expressed in the path tangential frame and transformed to the inertial frame by: 
\begin{equation}
    \boldsymbol{\sigma}_{f, d} = \mathbf{R}(\gamma_p(\theta))^T \boldsymbol{\sigma}_{f, d}^p.\label{eq:control_objectives:desired_formation_objective}
\end{equation}
To be able to specify the weighting for along- and cross-track formation task errors independently, we express the vessel formation gain in terms of $\boldsymbol{\Lambda}_f^p$ which is transformed similarly to \eqref{eq:control_objectives:desired_formation_objective} to obtain the vessel formation gain $\boldsymbol{\Lambda}_f$.

\vspace*{-2mm}
\subsection{Barycenter Kinematics}\label{subsec:control_system:barycenter_kinematics}
The barycenter given the two vessel positions can be expressed as:
\begin{equation}
    \boldsymbol{\sigma}_b = \mathbf{p}_b = \frac{1}{2}\left(\mathbf{p}_1 + \mathbf{p}_2\right). \label{eq:nsb:tasks:barycenter:kinematics:barycenter}
\end{equation}
Next, as the position of the barycenter cannot be controlled directly, only through each of the vessels, the barycenter kinematics is expressed in terms of the kinematics of each vessel \eqref{eq:vessel_modeling:eom:kinematic} by taking the time derivative of \eqref{eq:nsb:tasks:barycenter:kinematics:barycenter}
\begingroup
\setlength{\belowdisplayskip}{1pt}
\begin{subequations}
    \begin{align}
        \dot{x}_b &= \frac{1}{2}\Big[u_1\!\cos\psi_1\! -\! v_1\! \sin\psi_1\! +\! u_2\!\cos\psi_2\! -\! v_2\!\sin\psi_2\Big] \\
        \dot{y}_b &= \frac{1}{2}\Big[u_1\!\sin\psi_1\! +\! v_1\! \cos\psi_1\! +\! u_2\!\sin\psi_2\! +\! v_2\!\cos\psi_2\Big].
    \end{align}\label{eq:nsb:tasks:barycenter:kinematics:barycenter_kinematics_component_form}%
\end{subequations}
\endgroup
The path following error dynamics is then computed by substituting \eqref{eq:nsb:tasks:barycenter:kinematics:barycenter_kinematics_component_form} into the time derivative of \eqref{eq:nsb:tasks:barycenter:problem_definition:path_following_errors}
\begin{subequations}
    \begin{align}
        \dot{x}_{pb}^p\!&=\!\frac{1}{2}U_1\!\!\cos\!\left(\chi_1\!\!-\!\gamma_p\!\right)\!\!+\!\!\frac{1}{2}U_2\!\cos\!\left(\chi_2\!\!-\!\gamma_p\!\right)\!\!-\!\!\dot{\theta}(1\!\!-\!\kappa(\theta)y_{pb}^p\!\!\label{EQ:NSB:TASKS:BARYCENTER:KINEMATICS:PATH_FOLLOWING_ERRORDYNAMICS:x}\\
        \dot{y}_{pb}^p\!&=\!\frac{1}{2}U_1\!\sin\!\left(\chi_1\!-\!\gamma_p\!\right)\!+\!\frac{1}{2}U_2\!\sin\!\left(\chi_2\!-\!\gamma_p\!\right)\!-\!\kappa(\theta)\dot{\theta}x_{pb}^p,\!\label{EQ:NSB:TASKS:BARYCENTER:KINEMATICS:PATH_FOLLOWING_ERRORDYNAMICS:y}
    \end{align}\label{EQ:NSB:TASKS:BARYCENTER:KINEMATICS:PATH_FOLLOWING_ERRORDYNAMICS}%
\end{subequations}
where $\kappa(\theta)$ is the curvature of $P$ at $\theta$ and $\chi_i$ the course of vessel $i$. 

\vspace*{-2mm}
\subsection{Path Parametrization}\label{subsec:control_system:path_parametrization}
As the path is parametrized by the path variable $\theta$, it is possible to use the update law of the path variable as an extra degree of freedom when designing the controller \cite{Lapierre2007}. As in \cite{Belleter2019}, where the update law is chosen to obtain a desirable behavior of the $x_{pb}^p$ dynamic, a similar approach will be used here where the update law is chosen such that the propagation speed of the path tangential frame cancel the undesirable terms of \eqref{EQ:NSB:TASKS:BARYCENTER:KINEMATICS:PATH_FOLLOWING_ERRORDYNAMICS:x}:
\begin{equation}
    \dot{\theta}\! =\!\frac{1}{2}U_1\!\cos\!\left(\chi_1\!-\!\gamma_p\!\right)\!+\!\frac{1}{2}U_2\!\cos\!\left(\chi_2\!-\!\gamma_p\!\right)\!+\! k_\theta f_\theta\!\left(x_{pb}^p,\! y_{pb}^p\!\right)\!,\!\!\label{eq:nsb:tasks:barycenter:path:update_law}
\end{equation}
where $k_\theta \in\mathbb{R}_{>0}$ is a control gain and $f_\theta\colon\mathbb{R}^{2}\to\mathbb{R}$ is a function satisfying $f_\theta(x_{pb}^p, y_{pb}^p)x_{pb}^p>0$ that we use to ensure a desirable along-track error dynamics. Specifically, we choose:
\begin{equation}
    f_\theta\left(x_{pb}^p, y_{pb}^p\right) = \frac{x_{pb}^p}{\sqrt{1 + \left(x_{pb}^p\right)^2}}.
\end{equation}
Inserting \eqref{eq:nsb:tasks:barycenter:path:update_law} into \eqref{EQ:NSB:TASKS:BARYCENTER:KINEMATICS:PATH_FOLLOWING_ERRORDYNAMICS:x} we obtain the following along-track error dynamics
\begin{equation}
    \dot{x}_{pb}^p = - k_\theta \frac{x_{pb}^p}{\sqrt{1 + \left(x_{pb}^p\right)^2}} + \dot{\theta}\kappa(\theta)y_{pb}^p, \label{eq:nsb:tasks:barycenter:path:inserted_update_law_along_track_error_dynamics}
\end{equation}
where the choice of $f_\theta$ introduces a stabilizing term in the along-track error dynamics. 

\subsection{Guidance Law}\label{subsec:control_system:guidance_law}
To obtain path following for the barycenter, we choose the following LOS guidance law:
\begin{equation}
    \chi_{b,d} = \gamma_p(\theta) -  \arctan\left(\frac{y_{pb}^p}{\Delta\left(\mathbf{p}_{pb}^p\right)}\right). \label{eq:nsb:tasks:barycenter:guidance:guidance_law_heading}
\end{equation}
The guidance law consists of two terms. The first term is a feedforward term of the path tangential angle and the last term is a traditional line-of-sight term for steering the barycenter towards the desired path, see \cref{fig:nsb:tasks:barycenter:guidance:illustration_of_los_guidance}. Contrary to the LOS guidance law of \cite{Belleter2019}, the ocean current observer and the ocean current dependent term $g$ used to compensate for the ocean current is not present in  \eqref{eq:nsb:tasks:barycenter:guidance:guidance_law_heading} as the ocean current compensation is instead handled by the adaptive autopilots \eqref{eq:nsb:analysis:heading_controller}--\eqref{eq:nsb:analysis:surge_controller}. The solutions of the LOS guidance law are then integrated into the NSB framework by defining the desired barycenter task velocity
\begingroup
\setlength{\belowdisplayskip}{2pt}
\begin{equation}
    \mathbf{v}_{d, 3} = U_d\begin{bmatrix}
        \cos\chi_{b,d} \\
        \sin\chi_{b,d}
    \end{bmatrix}.
\end{equation}
\endgroup
Inspired by \cite{Belleter2019}, the lookahead term $\Delta(\mathbf{p}_{pb}^p)$ is chosen to have one constant part and one part depending on the path following errors
\begin{equation}
    \Delta\left(\mathbf{p}_{pb}^p\right) = \sqrt{\mu + \left(x_{pb}^p\right)^2 + \left(y_{pb}^p\right)^2}, \label{EQ:NSB:TASKS:BARYCENTER:GUIDANCE:LOOKAHEAD}
\end{equation}
where $\mu \in \mathbb{R}_{>0}$ is a constant. Contrary to \cite{Belleter2019}, the lookahead distance is not required to depend on $y_{pb}^p$ for the conditions of \cref{LEM:NSB:ANALYSIS:BARYCENTER:BOUNDEDNESS_X2} to hold. However, we still choose to include the dependence on the cross-track error to obtain a greater lookahead distance when the barycenter is far away from the desired path. 

Substituting \eqref{eq:nsb:tasks:barycenter:guidance:guidance_law_heading} in \eqref{EQ:NSB:TASKS:BARYCENTER:KINEMATICS:PATH_FOLLOWING_ERRORDYNAMICS:y} we obtain the following cross-track error dynamics
\begin{align}
    \dot{y}_{pb}^p &= \frac{1}{2} U_{d,1} \sin\left(\psi_{d,1} + \tilde{\psi}_1 + \beta_{d,1} - \gamma_p\right)\nonumber\\
    &\hphantom{{}=} + \frac{1}{2} U_{d,2} \sin\left(\psi_{d,2} + \tilde{\psi}_2 + \beta_{d,2} - \gamma_p\right)\nonumber\\
    &\hphantom{{}=} - \kappa(\theta)\dot{\theta}x_{pb}^p\nonumber\\
    &\hphantom{{}=} + \frac{1}{2}\tilde{u}_1\sin\left(\psi_1 - \gamma_p\right) + \frac{1}{2}\tilde{u}_2\sin\left(\psi_2 - \gamma_p\right) \label{eq:nsb:tasks:barycenter:guidance:cross_track_error_dynamics_before_inserting_guidance_law}\\
    &= -\frac{1}{2}\left(U_{d,1} + U_{d,2}\right)\frac{y_{pb}^p}{\sqrt{\Delta^2 + \left(y_{pb}^p\right)^2}} - \kappa(\theta)\dot{\theta}x_{pb}^p \nonumber\\
    &\hphantom{{}=}\!+\! G_1\!\left(\tilde{\psi}_1,\tilde{u}_1,\psi_{d,1},U_{d,1},\tilde{\psi}_2,\tilde{u}_2,\psi_{d,2},U_{d,2},y_{pb}^p\!\right)\!,\!\!\label{EQ:NSB:TASKS:BARYCENTER:GUIDANCE:CROSS_TRACK_ERROR_DYNAMICS}
\end{align}
where $U_{d,i} = \sqrt{u_{d,i}^2 + v_i^2}$ is the total desired speed of vessel $i$ and $G_1(\cdot)$ is a perturbing term of the vessels autopilots' error states, given by
\begin{equation}
    G_1(\cdot) = \frac{1}{2} \sum_{i=1}^2 G_2\left(\tilde{\psi}_i, \tilde{u}_i, \psi_{d,i}, U_{d,i}, y_{pb}^p\right)
\end{equation}
with 
\begingroup 
\allowdisplaybreaks
\begin{align}
    G_2&(\tilde{\psi}, \tilde{u}, \psi_{d}, U_{d}, y_{pb}^p) = \tilde{u}\sin\left(\psi - \gamma_p\right)\nonumber\\
    &\hphantom{{}=}+ U_d\left(1 - \cos\tilde{\psi}\right)\sin\left(\arctan\left(\frac{y_{pb}^p}{\Delta}\right)\right)\nonumber\\
    &\hphantom{{}=} + U_d\cos\left(\arctan\left(\frac{y_{pb}^p}{\Delta}\right)\right)\sin\tilde{\psi}.
\end{align}
\endgroup
Note that $G_1(\cdot)$ satisfy
\begin{subequations}
    \begin{gather}
        G_1\left(0, 0, \psi_{d,1}, U_{d,1}, 0, 0, \psi_{d,2}, U_{d,2}, y_{pb}^p\right) = 0 \label{eq:nsb:tasks:barycenter:guidance:G_1_features:zero}\\
        \lVert G_1(\cdot)\rVert \leq \zeta_1\left(U_{d,1}, U_{d,2}\right) \big \lVert \big[\tilde{\psi}_1, \tilde{u}_1, \tilde{\psi}_2, \tilde{u}_2\big]^T \big \rVert, \label{eq:nsb:tasks:barycenter:guidance:G_1_features:linear}
    \end{gather}\label{eq:nsb:tasks:barycenter:guidance:G_1_features}%
\end{subequations}
where $\zeta_1\left(U_{d,1}, U_{d,2}\right)>0$. This shows that the perturbing term $G_1(\cdot)$ is zero when the perturbing states are zero, and has at most linear growth in the perturbing states. 

The desired yaw rate is found by substituting \eqref{eq:nsb:tasks:barycenter:guidance:guidance_law_heading} into \eqref{eq:nsb:interface:converting:heading} and taking the time-derivative:
\begingroup
\allowdisplaybreaks
\begin{align}
    r_d &= \dot{\psi}_d = \kappa(\theta)\dot{\theta} - \frac{\dot{v}u_d - \dot{u}_d v}{u_d^2 + v^2}\nonumber\\
    \!-&\frac{1}{\Delta^2\!+\!\left(y_{pb}^p\right)^2}\!\left[\Delta\dot{y}_{pb}^p\!\!-y_{pb}^p\!\left(\frac{\partial\Delta}{\partial x_{pb}^p}\dot{x}_{pb}^p\!+\!\frac{\partial\Delta}{\partial y_{pb}^p}\dot{y}_{pb}^p\!\right)\!\right].\!\!
\end{align}
\endgroup
Substituting the along-track and cross-track error dynamics \eqref{eq:nsb:tasks:barycenter:path:inserted_update_law_along_track_error_dynamics} and \eqref{EQ:NSB:TASKS:BARYCENTER:GUIDANCE:CROSS_TRACK_ERROR_DYNAMICS} along with the sway dynamics \eqref{EQ:NSB:ANALYSIS:VESSEL_MODEL:sway} we obtain
\begingroup 
\setlength{\abovedisplayskip}{1pt}
\allowdisplaybreaks
\begin{align}
    r_d &= \kappa(\theta)\dot{\theta} - \frac{u_d}{u_d^2 + v^2}(X(u, u_c) r + Y(u,u_c)v \nonumber\\
    &\hphantom{{}=}- Y(u,u_c)v_c) + \frac{\dot{u}_d v}{u_d^2 + v^2}\nonumber\\
    &\hphantom{{}=}- \frac{1}{\Delta^2 + \left(y_{pb}^p\right)^2} \Bigg[\left(\Delta - y_{pb}^p\frac{\partial \Delta}{\partial y_{pb}^p}\right)\Bigg[G_1(\cdot)\nonumber\\
    &\hphantom{{}=}- \frac{1}{2}\left(U_{d,1} + U_{d,2}\right)\frac{y_{pb}^p}{\sqrt{\Delta^2 + \left(y_{pb}^p\right)^2}} - \kappa(\theta)\dot{\theta}x_{pb}^p\Bigg]\nonumber\\
    &\hphantom{{}=}- y_{pb}^p\frac{\partial \Delta}{\partial x_{pb}^p}\Bigg[- k_\theta \frac{x_{pb}^p}{\sqrt{1 + \left(x_{pb}^p\right)^2}} + \dot{\theta}\kappa(\theta)y_{pb}^p\Bigg]\Bigg].\label{eq:nsb:analysis:barycenter:r_d}
\end{align}
\endgroup
\begin{remark}
    Looking at \eqref{eq:nsb:analysis:barycenter:r_d} it is clear how the expression of $\dot{\psi}_d$ contains terms depending on $\dot{v}$, which depends on the unknown ocean current $u_c$ and the relative surge and sway speeds $u_r$ and $v_r$. Interestingly, contrary to \cite{Belleter2019} where $\dot{\psi}_d$ was also dependent on unknown variables and could not be realised, this is not an issue in \eqref{eq:nsb:analysis:barycenter:r_d}. Specifically, in \cite{Belleter2019}, the terms depending on the unknown ocean currents appeared through the along- and cross-track error dynamics. As these states could not be measured directly, they had to be calculated, requiring knowledge about the unknown ocean current. However, in \eqref{eq:nsb:analysis:barycenter:r_d}, the terms depending on the unknown variables appear only through the sway dynamics $\dot{v}$ of the vessel, which can be measured using e.g. a IMU, and is therefore available for feedback. This, comes from our choice of defining the LOS guidance law in terms of absolute velocities, contrary to \cite{Belleter2019} where relative velocities are used. Although this apparently is a small difference, an important implication is that this allows \eqref{eq:nsb:analysis:barycenter:r_d} to be realised by obtaining $\dot{v}$ through available sensor measurement instead of requiring knowledge about the unknown ocean current as $\dot{\psi}_d$ in \cite[Eq. (35)]{Belleter2019}.
\end{remark}

\begin{figure}[htbp]
        \centering
        \newcommand\vessel[4]{%
            \coordinate (vessel_base_#1) at (#2, #3);
            \pgfmathsetmacro\bowx{cos(#4)}
            \pgfmathsetmacro\bowy{sin(#4)}
            \pgfmathsetmacro\basexport{cos(#4 + 45)}
            \pgfmathsetmacro\baseyport{sin(#4 + 45)}
            \pgfmathsetmacro\basexstb{cos(#4 - 45)}
            \pgfmathsetmacro\baseystb{sin(#4 - 45)}
            \coordinate (vessel_aft_#1) at ($(vessel_base_#1) - (1.4142 * 0.3 * \bowx,1.4142 * 0.3 * \bowy)$);
            \coordinate (vessel_bow_#1) at ($(vessel_aft_#1) + (1.4142 * 0.875 * \bowx,1.4142 * 0.875 * \bowy)$);
            \draw[thick] ($(vessel_aft_#1) + (0.25 * \basexport,0.25 * \baseyport)$) -- ($(vessel_aft_#1) + (0.25 * \basexstb,0.25 * \baseystb)$);
            \pgfmathsetmacro\outtop{#4 - 5}
            \pgfmathsetmacro\outbottom{#4 + 5}
            \pgfmathsetmacro\intop{#4 + 215}
            \pgfmathsetmacro\inbottom{#4 + 145}
            \draw[thick] ($(vessel_aft_#1) + (0.25 * \basexport,0.25 * \baseyport)$) to [out=\outbottom,in=\inbottom,looseness=0.8] (vessel_bow_#1);
            \draw[thick] ($(vessel_aft_#1) + (0.25 * \basexstb,0.25 * \baseystb)$) to [out=\outtop,in=\intop,looseness=0.8] (vessel_bow_#1);
    }
        \begin{tikzpicture}[
            transform shape,
            extended line/.style={shorten >=-#1,shorten <=-#1},
            extended line/.default=1cm,
            one end extended/.style={shorten >=-#1},
            one end extended/.default=1cm,
            dot/.style={circle,inner sep=1pt,fill},
            link/.style={distance=1.5mm,very thick, blue},
            boat/.style={distance=1.5mm,very thick, black},
            tangent/.style={
            decoration={
                markings,
                mark=
                    at position #1
                    with
                    {
                        \coordinate (tangent point-\pgfkeysvalueof{/pgf/decoration/mark info/sequence number}) at (0pt,0pt);
                        \coordinate (tangent unit vector-\pgfkeysvalueof{/pgf/decoration/mark info/sequence number}) at (1,0pt);
                        \coordinate (tangent far away vector-\pgfkeysvalueof{/pgf/decoration/mark info/sequence number}) at (5,0pt);
                        \coordinate (tangent orthogonal unit vector-\pgfkeysvalueof{/pgf/decoration/mark info/sequence number}) at (0pt,1);
                        \coordinate (tangent negative orthogonal unit vector-\pgfkeysvalueof{/pgf/decoration/mark info/sequence number}) at (0pt,-1);
                    }
            },
            postaction=decorate
        },
        use tangent/.style={
            shift=(tangent point-#1),
            x=(tangent unit vector-#1),
            y=(tangent orthogonal unit vector-#1)
        },
        use tangent/.default=1]
        |   
            \draw[ tangent=0.4, tangent=0.56, thick] plot [smooth] coordinates { (0,0) (2,1) (4,1) (5,1.5)};
            \node[label=$P$] at (0,0.1) {};
            \coordinate[use tangent] (p_p) at (0,0); 
            \node[dot, label=120:$\mathbf{p}_p$] at (p_p) {};
            \draw[dotted,  extended line=1cm, thick] (tangent point-1) -- (tangent far away vector-1);
            \draw[decorate, decoration={brace,raise=3pt,amplitude=3pt}, thick] (tangent point-1)  -- node[above, yshift=8pt, xshift=10pt]{$\Delta\left(\mathbf{p}_{pb}^p\right)$} (tangent far away vector-1) ;
            \coordinate (p_los) at (tangent far away vector-1);
            \node[dot, label=90:$\mathbf{p}_\text{LOS}$] at (p_los) {};
            
            \vessel{1}{2}{2.5}{-5};
            \vessel{2}{4}{-3}{70};
            \node[dot] at (vessel_base_2) {};
            
            \coordinate (p_b) at ($(vessel_base_1)!0.5!(vessel_base_2)$);
            \node[dot, label=-45:$\mathbf{p}_b$] at (p_b) {};
            
            \coordinate (p_b_pro) at ($(p_p)!(p_b)!(tangent negative orthogonal unit vector-1)$);
            \draw[densely dashed, thick] (p_p) -- node [left] {$y_{pb}^p$} (p_b_pro);
            \draw[densely dashed, thick] (p_b_pro) -- node [below] {$x_{pb}^p$}(p_b);
            
            \coordinate (yaxis_vessel) at ($(vessel_base_2) + (0,2.3)$);
            \coordinate (yaxis_p_b) at ($(p_b) + (0,1)$);
            \coordinate (yaxis_p_p) at ($(p_p) + (0,1)$);
            \draw[dashed,  thick] (vessel_base_2) -- (yaxis_vessel);
            \draw[dashed,  thick] (p_b) -- (yaxis_p_b);
            \draw[dashed, thick] (p_p) -- (yaxis_p_p);
            
            \draw[-{Latex[length=2mm]}, densely dashed, thick, one end extended=1.5cm] (vessel_base_2) -- node[pos=3] {$x_b$} (vessel_bow_2);
            
            \coordinate (sog) at ($(vessel_base_2) + (1.25, 2)$);
            \draw[-{Latex[length=2mm]}, thick] (vessel_base_2) -- node[pos=1.1] {$U$} (sog);
            
            \draw[-{Latex[length=2mm]}, thick] (p_b) {} -- node[at end, label={[label distance=0.5cm]below:\text{LOS vector}}] {} (p_los);

            \pic [draw, <-, thick, angle radius=16mm, angle eccentricity=1.2, "$\psi$"] {angle = vessel_bow_2--vessel_base_2--yaxis_vessel};
            \pic [draw, <-, thick, angle radius=16mm, angle eccentricity=1.2, "$\beta$"] {angle = sog--vessel_base_2--vessel_bow_2};
            \pic [draw, <-, thick, angle radius=11mm, angle eccentricity=1.2, "$\chi$", pic text options={shift={(-4pt,2pt)}}] {angle = sog--vessel_base_2--yaxis_vessel};
            \pic [draw, <-, thick, angle radius=6mm, angle eccentricity=1.5, "$\chi_{b,d}$"] {angle = p_los--p_b--yaxis_p_b};
            
            \coordinate (intersection_tangent_yaxis) at (intersection of vessel_base_2--yaxis_vessel) and tangent point-1--tangent far away vector-1);
            \pic [draw,<-, angle radius=6mm, angle eccentricity=1.7, "$\gamma_p(\theta)$", thick] {angle = p_los--p_p--yaxis_p_p};

          \end{tikzpicture}
        \caption[Illustration of the LOS guidance law for path following for the barycenter.]{Illustration of the LOS guidance law for path following for the barycenter. The subscripts for the lower vessel's states are omitted for simplicity.}
        \label{fig:nsb:tasks:barycenter:guidance:illustration_of_los_guidance}
        \vspace*{-2.5mm}
    \end{figure}

\section{CLOSED-LOOP ANALYSIS}
In this section, we will analyse the closed-loop error dynamics of the barycenter path following task, described in \crefrange{subsec:control_system:barycenter_kinematics}{subsec:control_system:guidance_law}

The closed-loop error variables are defined as follows:
\begingroup
\setlength{\jot}{0.25ex}
\begin{subequations}
    \begin{align}
        \tilde{\mathbf{X}}_1 &\triangleq \left[x_{pb}^p, y_{pb}^p\right]^T\\
        \tilde{\mathbf{X}}_{2,i} &\triangleq \left[\tilde{u}_i, \dot{\tilde{\psi}}_i, s_i\right]^T\\
        \tilde{\mathbf{X}}_2 &\triangleq \left[\tilde{\mathbf{X}}_{2,1}^T, \tilde{\mathbf{X}}_{2,2}^T\right]^T,
    \end{align}\label{eq:nsb:analysis:barycenter:error_states}%
\end{subequations}%
\endgroup%
where $\tilde{\mathbf{X}}_{2,i}$, contains the autopilot error states of each vessel, that converge independent of $\tilde{\mathbf{X}}_1$, and the coordinate transformation $s_i = \tilde{\psi}_i + \lambda \dot{\tilde{\psi}}_i$ is applied motivated by \cite{Moe2016}. Moreover, we define the estimation errors $\tilde{\boldsymbol{\theta}}_x = \hat{\boldsymbol{\theta}}_x - \boldsymbol{\theta}_x$, where $x \in \{u, r\}$, for each vessel respectively.  

Thus, the error dynamics of the closed-loop barycenter path following system consisting of the vessels, given by \eqref{eq:model:component_form}, and the control laws \eqref{eq:nsb:analysis:heading_controller}--\eqref{eq:nsb:analysis:surge_controller} with the guidance laws \eqref{eq:nsb:tasks:barycenter:guidance:guidance_law_heading} may be written as 
\begin{subequations}
    \begin{align}
        \dot{\tilde{\boldsymbol{X}}}_1\!&=\!\!\begin{bmatrix}
            - k_\theta \frac{x_{pb}^p}{\sqrt{1 + \left(x_{pb}^p\right)^2}} + \dot{\theta}\kappa(\theta)y_{pb}^p \\
            \frac{-\frac{1}{2}\left(U_{d,1}+U_{d,2}\right) y_{pb}^p}{\sqrt{\Delta^2+\left(y_{pb}^p\right)^2}}-\kappa(\theta)\dot{\theta}x_{pb}^p
        \end{bmatrix}+\begin{bmatrix}
        0\\
        G_1(\cdot)
        \end{bmatrix}\label{eq:nsb:analysis:barycenter:closed_loop:X1}\\
        \dot{\tilde{\boldsymbol{X}}}_{2,i}\!&=\!\!\begin{bmatrix}
        -\!\left(\!\frac{d_{11}}{m_{11}}\!+\! k_{u,i}\!\right)\!\tilde{u}_i\!-\!\boldsymbol{\phi}_u^T\!(\cdot)\tilde{\boldsymbol{\theta}}_{u,i}\!-\! k_{e,i}\sign(\tilde{u}_i)\!\\
        -\lambda_i\tilde{\psi}_i+s\!\\
        -k_{\psi,i}\tilde{\psi}_i\!-\! k_{r,i}s_i\!-\!\boldsymbol{\phi}_r^T\!(\cdot)\tilde{\boldsymbol{\theta}}_{r,i}\!-\! k_{d,i}\sign(s_i)\!
        \end{bmatrix}\!\label{eq:nsb:analysis:barycenter:closed_loop:X2}\\
        \dot{\tilde{\boldsymbol{\theta}}}_{r,i} &= \gamma_r\boldsymbol{\phi}_r^T(u, v, r, \psi)s_i\label{eq:nsb:analysis:barycenter:closed_loop:theta_r}\\
        \dot{\tilde{\boldsymbol{\theta}}}_{u,i} &= \gamma_u\boldsymbol{\phi}_u^T(\psi, r)\tilde{u}_i\label{eq:nsb:analysis:barycenter:closed_loop:theta_u}\\
        \dot{v}_i &= X(u_{d,i} + \tilde{u}_i, u_c)r_{d_i} + X(u_{d,i} + \tilde{u}_i, u_c)\tilde{r}_i \nonumber\\
        &\hphantom{{}=}+ Y(u_{d,i} + \tilde{u}_i, u_c)v_i - Y(u_{d,i} + \tilde{u}_i, u_c)v_c.\label{eq:nsb:analysis:barycenter:closed_loop:sway}
    \end{align}\label{eq:nsb:analysis:barycenter:closed_loop}%
\end{subequations}
To solve the barycenter control objective defined in \cref{sec:control_objectives}, the error states $\tilde{\boldsymbol{X}}_1$ and $\tilde{\boldsymbol{X}}_2$ should converge to zero, while the estimation errors $\tilde{\boldsymbol{\theta}}_{r,i}$, $\tilde{\boldsymbol{\theta}}_{u,i}$ and the sway velocity $v_i$ should remain bounded. 
\begin{lemma}[Forward Completeness]\label{LEM:NSB:ANALYSIS:BARYCENTER:FORWARD_COMPLETENESS}
The trajectories of the closed-loop system \eqref{eq:nsb:analysis:barycenter:closed_loop} are forward complete
\end{lemma}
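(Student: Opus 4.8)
The plan is to establish forward completeness in the standard two-part fashion: first invoke local existence of (Carath\'eodory/Filippov) solutions --- the right-hand side of \eqref{eq:nsb:analysis:barycenter:closed_loop} is piecewise continuous in $t$ and locally Lipschitz away from the switching surfaces introduced by the $\sign(\cdot)$ terms --- and then rule out a finite escape time by exhibiting a nonnegative function $W$ of the full state whose derivative along solutions satisfies a linear differential inequality $\dot W\le k_1 W+k_2$ with $k_1,k_2\ge 0$ finite on every bounded interval. By the comparison lemma this bounds $W$, hence the state, on every finite interval, so no trajectory escapes in finite time. I would build $W$ additively, following the cascade in which the actuated autopilot and adaptation states converge independently of $\tilde{\mathbf X}_1$, the sway states $v_i$ are driven by them, and the path errors $\tilde{\mathbf X}_1$ sit at the top of the chain.

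First I would treat the actuated subsystem \eqref{eq:nsb:analysis:barycenter:closed_loop:X2}--\eqref{eq:nsb:analysis:barycenter:closed_loop:theta_u} with a quadratic Lyapunov function $V_2$ in $(\tilde u_i,\tilde\psi_i,s_i,\tilde{\boldsymbol\theta}_{u,i},\tilde{\boldsymbol\theta}_{r,i})$ as in \cite{Moe2016}. The decisive point is that the indefinite regressor cross-terms $-\boldsymbol\phi_u^T\tilde{\boldsymbol\theta}_{u,i}$ and $-\boldsymbol\phi_r^T\tilde{\boldsymbol\theta}_{r,i}$ cancel against the adaptation laws \emph{irrespective} of how $\boldsymbol\phi_u,\boldsymbol\phi_r$ depend on the remaining states, leaving $\dot V_2$ negative semidefinite in the autopilot error states. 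Hence $\tilde{\mathbf X}_2$ and the parameter errors are bounded on $[0,\infty)$, which already gives forward completeness of this subsystem and, crucially, supplies uniform bounds on $\tilde u_i,\tilde\psi_i$ and $\tilde r_i=\dot{\tilde\psi}_i$ for use downstream. Next, for the sway dynamics \eqref{eq:nsb:analysis:barycenter:closed_loop:sway} I would use $V_v=\tfrac12\sum_i v_i^2$ together with \cref{ass:model:Ymin}. Collapsing $X r_{d,i}+X\tilde r_i=X r_i$ recovers the true sway equation, and the key observation is that the sideslip-compensated desired yaw rate \eqref{eq:nsb:analysis:barycenter:r_d} grows at most affinely in $v_i$: every term is either normalised by $1/(u_{d,i}^2+v_i^2)$, which makes its sway-dependence bounded, or is a curvature term $\kappa(\theta)\dot\theta(\cdot)$ multiplied by a fraction of the path errors that is bounded in $\tilde{\mathbf X}_1$, with $\dot\theta$ at most linear in $v_i$ through $U_{d,i}=\sqrt{u_{d,i}^2+v_i^2}$. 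Combined with boundedness of $X,Y$ for bounded inputs and of the current (\cref{ass:vessel_modeling:eom:ocean_currents}), this yields $\dot V_v\le -Y_{\min}\sum_i v_i^2+b_1\sum_i v_i^2+b_2\le k_1 V_v+k_2$.

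Finally I would close the chain with $V_1=\tfrac12\lVert\tilde{\mathbf X}_1\rVert^2$ for \eqref{eq:nsb:analysis:barycenter:closed_loop:X1}. The two curvature cross-terms $\dot\theta\kappa(\theta)y_{pb}^p$ and $-\kappa(\theta)\dot\theta x_{pb}^p$ are skew and cancel in $\dot V_1=x_{pb}^p\dot x_{pb}^p+y_{pb}^p\dot y_{pb}^p$, so they contribute no growth; the $f_\theta$ and line-of-sight terms are bounded by constants because $x/\sqrt{1+x^2}$ and $y/\sqrt{\Delta^2+y^2}$ are bounded; and by \eqref{eq:nsb:tasks:barycenter:guidance:G_1_features} the perturbation obeys $y_{pb}^p G_1(\cdot)\le\zeta_1(U_{d,1},U_{d,2})\lvert y_{pb}^p\rvert\,\lVert[\tilde\psi_1,\tilde u_1,\tilde\psi_2,\tilde u_2]^T\rVert$, which --- using the bounds from the actuated subsystem and the at-most-linear growth of $\zeta_1$ in the $U_{d,i}$ --- is dominated by a multiple of $(y_{pb}^p)^2+v_i^2+1$. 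Setting $W=V_1+V_v$ (with $V_2$ already bounded) therefore gives $\dot W\le k_1 W+k_2$, and the comparison lemma completes the proof.

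The hard part will be the sway step: unlike the straight-line or relative-velocity settings, the realised yaw-rate reference \eqref{eq:nsb:analysis:barycenter:r_d} couples the path errors, the curvature, the sway speed and the autopilot perturbation $G_1$ through several fractional nonlinearities, and the whole argument hinges on checking that none of these couplings is superlinear --- that is, that every path-error fraction in \eqref{eq:nsb:analysis:barycenter:r_d} is uniformly bounded and that the genuine $v_i$-dependence is confined to the $1/(u_{d,i}^2+v_i^2)$-normalised and curvature terms, so that the stabilising $-Y_{\min}v_i^2$ from \cref{ass:model:Ymin} keeps the differential inequality linear in $W$. Verifying these bounds term by term, while keeping track of the implicit definition of $r_{d,i}$ through $r_i=r_{d,i}+\tilde r_i$, is the routine but delicate core of the argument.
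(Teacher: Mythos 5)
Your proposal is correct and follows essentially the same route as the paper's proof: the autopilot and adaptation states are dispatched via the Lyapunov argument of \cite{Moe2016}, the sway dynamics \eqref{eq:nsb:analysis:barycenter:closed_loop:sway} via $\tfrac12 v_i^2$ together with the affine-in-$|v_i|$ bound on $r_{d,i}$ from \eqref{eq:nsb:analysis:barycenter:r_d} (all path-error fractions being uniformly bounded because $\Delta$ in \eqref{EQ:NSB:TASKS:BARYCENTER:GUIDANCE:LOOKAHEAD} grows with $x_{pb}^p$), and the path errors via $\tfrac12\lVert\tilde{\mathbf{X}}_1\rVert^2$ with the skew-symmetric curvature cancellation and the linear bound \eqref{eq:nsb:tasks:barycenter:guidance:G_1_features:linear} on $G_1$, followed by a comparison-lemma argument. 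The only immaterial differences are that the paper concludes the path-error step by invoking a forward-completeness-with-inputs result of Angeli--Sontag (treating the already forward-complete sway and autopilot states as inputs), whereas you fold sway and path errors into a single function $W=V_1+V_v$ and apply the comparison lemma once, and that you additionally make explicit the local-existence issue raised by the $\sign(\cdot)$ discontinuities, which the paper leaves implicit.
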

\begin{proof}
\ifincludeproofs
The proof of this lemma is given in Appendix B.
\else
Due to space restrictions, the proof is omitted here but can be found in \textit{Insert ArXiv reference here}.
\fi
\end{proof}
\begin{lemma}[Boundedness near $(\tilde{\mathbf{X}}_1, \tilde{\mathbf{X}}_2) = \mathbf{0}$]\label{LEM:NSB:ANALYSIS:BARYCENTER:BOUNDEDNESS_X1_X2}
The system \eqref{eq:nsb:analysis:barycenter:closed_loop:sway} is bounded near the manifold $(\tilde{\mathbf{X}}_1, \tilde{\mathbf{X}}_2) = \mathbf{0}$ if and only if the curvature of $P$ satisfies the following condition:
\begin{equation}
    \kappa_\text{max} \triangleq \max_{\theta \in P}|\kappa(\theta)| < \frac{Y_\text{min}}{ X_\text{max}}, \quad X_\text{max} \triangleq |X(u, u_c)|_\infty. \label{eq:nsb:analysis:barycenter:boundedness_X1_X2:max_curvature}
\end{equation}
\end{lemma}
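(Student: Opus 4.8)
The plan is to reduce the sway dynamics to its behaviour on the manifold $(\tilde{\mathbf{X}}_1,\tilde{\mathbf{X}}_2)=\mathbf{0}$ and to analyse the resulting $(v_1,v_2)$-subsystem. On this manifold we have $\tilde u_i=\tilde r_i=0$ (hence $u_i=u_{d,i}$ and $r_i=r_{d,i}$), $x_{pb}^p=y_{pb}^p=0$, and $G_1(\cdot)=0$ by \eqref{eq:nsb:tasks:barycenter:guidance:G_1_features:zero}. Substituting these into \eqref{eq:nsb:analysis:barycenter:r_d} collapses the entire bracketed LOS/curvature block to zero, leaving $r_{d,i}=\kappa(\theta)\dot\theta-\frac{u_{d,i}}{u_{d,i}^2+v_i^2}\big(X_i r_{d,i}+Y_i v_i-Y_i v_{c,i}\big)+\frac{\dot u_{d,i} v_i}{u_{d,i}^2+v_i^2}$, with the shorthands $X_i\triangleq X(u_{d,i},u_c)$ and $Y_i\triangleq Y(u_{d,i},u_c)$. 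Since $X_i r_{d,i}+Y_i v_i-Y_i v_{c,i}=\dot v_i$ by \eqref{eq:nsb:analysis:barycenter:closed_loop:sway}, the first step is to eliminate $r_{d,i}$: inserting $r_{d,i}$ back into the sway equation and solving the implicit relation for $\dot v_i$ gives the scalar reduced dynamics
\begin{equation}
\Big(1+\tfrac{X_i u_{d,i}}{u_{d,i}^2+v_i^2}\Big)\dot v_i = X_i\kappa(\theta)\dot\theta+\tfrac{X_i\dot u_{d,i}\,v_i}{u_{d,i}^2+v_i^2}+Y_i v_i-Y_i v_{c,i}. \nonumber
\end{equation}
I would first note that the multiplying factor tends to $1$ as $|v_i|\to\infty$, so it stays bounded away from zero for large states and the reduced ODE is well posed; asymptotically $\dot v_i=X_i\kappa\dot\theta+Y_i v_i+O(1)$.

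The crucial structural observation is that on the manifold the frame speed from \eqref{eq:nsb:tasks:barycenter:path:update_law} reduces to $\dot\theta=\tfrac12 U_1\cos(\chi_1-\gamma_p)+\tfrac12 U_2\cos(\chi_2-\gamma_p)$ with $U_i=\sqrt{u_{d,i}^2+v_i^2}$, so $\dot\theta$ grows at most linearly in $\|(v_1,v_2)\|$. Hence $X_i\kappa\dot\theta$ is the only right-hand-side contribution capable of overcoming the stabilising term $Y_i v_i$, which satisfies $Y_i\le -Y_\text{min}<0$ by Assumption \ref{ass:model:Ymin}. This competition is precisely what produces the curvature threshold, and the two-vessel coupling enters only through the shared $\dot\theta$.

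For sufficiency I would take $W=\tfrac12(v_1^2+v_2^2)$ and compute $\dot W=\sum_i v_i\dot v_i$. The damping terms give $\sum_i Y_i v_i^2\le -2Y_\text{min}W$, while using $|X_i|\le X_\text{max}$, $|\kappa|\le\kappa_\text{max}$, $U_i\le|u_{d,i}|+|v_i|$ and $(|v_1|+|v_2|)^2\le 2(v_1^2+v_2^2)$, the curvature term is bounded by $2X_\text{max}\kappa_\text{max}W$ plus contributions at most linear in $\sqrt W$ (the factors tending to $1$ are absorbed into the lower-order remainder for $\|(v_1,v_2)\|$ large). Collecting, $\dot W\le -2\big(Y_\text{min}-X_\text{max}\kappa_\text{max}\big)W+c_1\sqrt W+c_2$, so whenever $\kappa_\text{max}<Y_\text{min}/X_\text{max}$ the quadratic coefficient is negative and $\dot W<0$ outside a ball, yielding ultimate boundedness of $(v_1,v_2)$, hence of the sway dynamics near the manifold.

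For necessity I would argue by contraposition: if $\kappa_\text{max}\ge Y_\text{min}/X_\text{max}$, I would exhibit an admissible configuration — a constant-curvature (circular) path with $|\kappa|=\kappa_\text{max}$ and a current/alignment realisation for which $X_i$ approaches $X_\text{max}$, $Y_i$ approaches $-Y_\text{min}$ and $\cos(\chi_i-\gamma_p)$ matches the sign of $v_i$ — for which the same estimate gives $\dot W\ge 2\big(X_\text{max}\kappa_\text{max}-Y_\text{min}\big)W-(\text{lower order})\ge 0$ for large $W$, so $v$ escapes every bounded set. The hard part will be the manifold reduction of $r_{d,i}$ together with the bookkeeping of the implicit $\dot v_i$ relation, and, for the ``only if'' direction, verifying that the worst-case alignment which makes the one-sided Lyapunov estimate tight is actually realisable on the manifold; that is what upgrades the sufficiency bound to the sharp iff.
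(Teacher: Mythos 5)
Your sufficiency argument is sound and reaches the same structural conclusion as the paper's proof --- the competition between the curvature-induced term $X_i\kappa(\theta)\dot\theta\, v_i$ and the damping $Y_i v_i^2 \le -Y_\text{min}v_i^2$ from Assumption \ref{ass:model:Ymin} --- but by a genuinely different route, and in two respects a sharper one. The paper does not restrict to the manifold first: it expands $r_{d,i}v_i$ in full from \eqref{eq:nsb:analysis:barycenter:r_d}, collects every term that vanishes on the manifold or grows at most linearly into a bookkeeping function $F(\cdot)$ (using a skew-symmetry cancellation of the $\partial\Delta/\partial x_{pb}^p$, $\partial\Delta/\partial y_{pb}^p$ terms and the bound $\left|1+\Delta x_{pb}^p/\left(\Delta^2+\left(y_{pb}^p\right)^2\right)\right|\le 2$), and only then evaluates on the manifold; your manifold-first reduction makes all of that machinery unnecessary, since those terms vanish identically when $x_{pb}^p=y_{pb}^p=0$ and $G_1(\cdot)=0$. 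More importantly, the paper runs its Lyapunov analysis per vessel with $V=\tfrac12 v_i^2$, so the cross term $|\kappa||v_j||v_i|$ entering through $U_j$ inside $\dot\theta$ is filed under ``terms linear in $|v_i|$'', which tacitly assumes $v_j$ bounded --- precisely what is being proved. Your joint function $W=\tfrac12\left(v_1^2+v_2^2\right)$, combined with the factor $\tfrac12$ in \eqref{eq:nsb:tasks:barycenter:path:update_law} and $(|v_1|+|v_2|)^2\le 2\left(v_1^2+v_2^2\right)$, treats this coupling honestly and still recovers the sharp threshold $Y_\text{min}/X_\text{max}$ (a naive symmetrization of the paper's per-vessel bounds would only give $Y_\text{min}/(2X_\text{max})$). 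Your explicit resolution of the algebraic loop ($r_i=r_{d,i}$ on the manifold while $r_{d,i}$ contains $\dot v_i$) is likewise more candid than the paper, which keeps $r$ as a free argument of its bounding functions and never substitutes $r=r_d$.

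Two caveats. First, your reduced ODE is well posed only where the prefactor $1+X_iu_{d,i}/\left(u_{d,i}^2+v_i^2\right)$ is bounded away from zero. You verify this for large $|v_i|$, but the coupling term for vessel $i$ scales with $|v_j|$, so your constant $c_1$ in $\dot W \le -2\left(Y_\text{min}-X_\text{max}\kappa_\text{max}\right)W+c_1\sqrt W+c_2$ is finite only if the prefactor is also controlled at small and moderate $|v_i|$, where $X_iu_{d,i}/\left(u_{d,i}^2+v_i^2\right)$ need not be small; this requires either a mild model-dependent condition such as $|X(u_{d,i},u_c)|<u_{d,i}$, or an argument that bounds $v_i\dot v_i$ directly from the implicit relation without dividing. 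Second, your ``only if'' direction remains a sketch with an acknowledged hole: nonnegativity of a one-sided Lyapunov estimate does not imply escape, and the adversarial alignment ($\cos(\chi_i-\gamma_p)$ matching $\sign(v_i)$, $X_i$ near $X_\text{max}$, $Y_i$ near $-Y_\text{min}$) is dictated by the closed-loop geometry rather than freely assignable. You should be aware, though, that the paper's own proof has exactly the same limitation: it establishes only sufficiency (negativity of $\dot V$ for large $|v_i|$ under \eqref{eq:nsb:analysis:barycenter:boundedness_X1_X2:max_curvature}) and never shows that violating the curvature bound yields unbounded sway, so on the necessity half you are no worse off than the published argument.
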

\begin{proof}
\ifincludeproofs
The proof of this lemma is given in Appendix B.
\else
Due to space restrictions, the proof is omitted here but can be found in \textit{Insert ArXiv reference here}.%
\fi
\end{proof}
\begin{lemma}[Boundedness near $\tilde{\mathbf{X}}_2 = \mathbf{0}$]\label{LEM:NSB:ANALYSIS:BARYCENTER:BOUNDEDNESS_X2}
The system \eqref{eq:nsb:analysis:barycenter:closed_loop:sway} is bounded near the manifold $\tilde{\mathbf{X}}_2 = \mathbf{0}$, independently of $\tilde{\mathbf{X}}_1$, if the conditions of \cref{LEM:NSB:ANALYSIS:BARYCENTER:BOUNDEDNESS_X1_X2} is satisfied, and the constant term of the lookahead distance is chosen accordingly to 
\begingroup
\setlength{\belowdisplayskip}{1pt}
\begin{equation}
\mu > \frac{4X_\text{max}}{Y_\text{min}-X_\text{max}\kappa_\text{max}}, \label{eq:nsb:analysis:barycenter:boundedness_X2:min_mu}
\end{equation}
\endgroup
where $X_\text{max} \triangleq |X(u, u_c)|_\infty$ and $\kappa_\text{max} \triangleq \max_{\theta \in P}|\kappa(\theta)|$.
\end{lemma}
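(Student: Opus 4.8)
The plan is to run an ISS/Lyapunov argument on the sway state alone, exploiting that the quadratic damping supplied by $Y(\cdot)\le-Y_\text{min}$ dominates every destabilising contribution of the desired yaw rate $r_{d,i}$ once the curvature margin of \cref{LEM:NSB:ANALYSIS:BARYCENTER:BOUNDEDNESS_X1_X2} and a large enough lookahead $\mu$ hold. First I would restrict the closed loop to $\tilde{\mathbf{X}}_2=\mathbf{0}$. From the definition of $\tilde{\mathbf{X}}_{2,i}$ and the coordinate $s_i=\tilde\psi_i+\lambda\dot{\tilde\psi}_i$ in \eqref{eq:nsb:analysis:barycenter:error_states}, the conditions $\tilde u_i=\dot{\tilde\psi}_i=s_i=0$ force $\tilde\psi_i=0$, hence $u_i=u_{d,i}$ and $\tilde r_i=0$ (so $r_i=r_{d,i}$), and by \eqref{eq:nsb:tasks:barycenter:guidance:G_1_features:zero} the term $G_1(\cdot)$ vanishes. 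The sway dynamics \eqref{eq:nsb:analysis:barycenter:closed_loop:sway} then collapses to
\[
\dot v_i = X(u_{d,i},u_c)\,r_{d,i}+Y(u_{d,i},u_c)\,v_i-Y(u_{d,i},u_c)\,v_c,
\]
which I would study through $W=\tfrac12\sum_{i=1}^{2}v_i^2$. Since the feedforward $\kappa\dot\theta$ and the cross-track term in $r_{d,i}$ couple the two vessels (through $\dot\theta$ and $U_{d,1}+U_{d,2}$), treating both sway states jointly in $W$ is the natural choice.

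Next I would substitute $r_{d,i}$ from \eqref{eq:nsb:analysis:barycenter:r_d}. The subtlety is that the term $-\tfrac{u_{d,i}}{u_{d,i}^2+v_i^2}\big(Xr_i+Yv_i-Yv_c\big)$ appearing there is exactly $-\tfrac{u_{d,i}}{u_{d,i}^2+v_i^2}\dot v_i$, so the equation is implicit in $\dot v_i$; collecting these terms puts $\dot v_i\big(1+\tfrac{Xu_{d,i}}{u_{d,i}^2+v_i^2}\big)$ on the left. Because $\big|\tfrac{Xu_{d,i}}{u_{d,i}^2+v_i^2}\big|\le\tfrac{X_\text{max}}{2|v_i|}$, this coefficient tends to $1$ as $|v_i|\to\infty$, and since boundedness only requires $\dot W<0$ outside a compact set, I would carry out the estimate in the large-$|v_i|$ regime, where the coefficient is bounded away from zero and $\sign(\dot W)$ is governed by the right-hand side.

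The core step is a uniform-in-$\tilde{\mathbf{X}}_1$ bound on the remaining terms of $r_{d,i}$. Using $\tfrac{\partial\Delta}{\partial x_{pb}^p}=\tfrac{x_{pb}^p}{\Delta}$, $\tfrac{\partial\Delta}{\partial y_{pb}^p}=\tfrac{y_{pb}^p}{\Delta}$ and $\Delta^2=\mu+(x_{pb}^p)^2+(y_{pb}^p)^2$ from \eqref{EQ:NSB:TASKS:BARYCENTER:GUIDANCE:LOOKAHEAD}, the combinations $\Delta-y_{pb}^p\tfrac{\partial\Delta}{\partial y_{pb}^p}=\tfrac{\mu+(x_{pb}^p)^2}{\Delta}$ and $y_{pb}^p\tfrac{\partial\Delta}{\partial x_{pb}^p}=\tfrac{x_{pb}^p y_{pb}^p}{\Delta}$ make every bracket in \eqref{eq:nsb:analysis:barycenter:r_d} bounded \emph{uniformly} in $(x_{pb}^p,y_{pb}^p)$: the curvature/feedforward contributions are bounded by constant multiples of $\kappa_\text{max}|\dot\theta|$, while the cross-track LOS contribution is bounded by $\tfrac{1}{\sqrt\mu}$ times a constant multiple of $U_{d,1}+U_{d,2}$. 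The update law \eqref{eq:nsb:tasks:barycenter:path:update_law} with the chosen $f_\theta$ gives $|\dot\theta|\le\tfrac12(U_1+U_2)+k_\theta$ and $U_{d,i}\le u_{d,\max}+|v_i|$ (with $u_{d,\max}$ an upper bound on the desired surge). Substituting into $\dot W$ and using $2|v_1||v_2|\le v_1^2+v_2^2$ then yields
\[
\dot W\le-\Big(Y_\text{min}-X_\text{max}\kappa_\text{max}-\tfrac{c\,X_\text{max}}{\sqrt\mu}\Big)\sum_i v_i^2+a\sum_i|v_i|+b,
\]
for positive constants $a,b,c$, the $-Yv_c$ term contributing only the linear piece via the bounded current of Assumption~\ref{ass:vessel_modeling:eom:ocean_currents}. \cref{LEM:NSB:ANALYSIS:BARYCENTER:BOUNDEDNESS_X1_X2} guarantees $Y_\text{min}-X_\text{max}\kappa_\text{max}>0$, and choosing $\mu$ as in \eqref{eq:nsb:analysis:barycenter:boundedness_X2:min_mu} makes the quadratic coefficient strictly positive, so $\dot W<0$ for $\|[v_1,v_2]^T\|$ large; hence $v_i$ is ultimately bounded uniformly in $\tilde{\mathbf{X}}_1$, and the claim near the manifold follows by a standard continuity/perturbation argument.

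I expect the main obstacle to be the book-keeping in this last step: the yaw reference $r_{d,i}$ depends on $v_i$ both explicitly (through $U_{d,i}$ and the algebraic $\dot v_i$ loop) and implicitly (through $\dot\theta$ and the path errors), so the destabilising terms are genuinely quadratic in the sway states. The whole argument hinges on showing that their aggregate coefficient can be forced below $Y_\text{min}$ — the feedforward part by the curvature margin of \cref{LEM:NSB:ANALYSIS:BARYCENTER:BOUNDEDNESS_X1_X2} and the lookahead part by enlarging $\mu$ — while keeping every bound independent of the arbitrarily large path errors $x_{pb}^p$ and $y_{pb}^p$, which is precisely what the explicit simplifications of the $\Delta$-brackets deliver.
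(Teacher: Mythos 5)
Your proposal takes essentially the same route as the paper's proof: a quadratic Lyapunov function in the sway velocities, substitution of the explicit $r_d$ from \eqref{eq:nsb:analysis:barycenter:r_d}, bounds that are uniform in $\tilde{\mathbf{X}}_1$ obtained from the structure of the lookahead distance \eqref{EQ:NSB:TASKS:BARYCENTER:GUIDANCE:LOOKAHEAD}, and domination of all destabilizing terms by the damping $Y\le-Y_\text{min}$. Two of your choices are in fact cleaner than the paper's: resolving the algebraic loop by moving $-\tfrac{u_d}{u_d^2+v^2}\dot v$ to the left-hand side (the paper instead carries the actual yaw rate $r$ as a formal argument of its remainder functions $H(\cdot)$, $\Phi(\cdot)$), and using the joint function $W=\tfrac12(v_1^2+v_2^2)$ rather than a per-vessel analysis.

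There is, however, a genuine gap: the constants in your final display do not follow from your own steps, and the constants are precisely the content of the lemma. First, bounding ``every bracket'' of \eqref{eq:nsb:analysis:barycenter:r_d} separately forfeits the cancellation the paper's proof hinges on: with $\partial\Delta/\partial x_{pb}^p=x_{pb}^p/\Delta$ and $\partial\Delta/\partial y_{pb}^p=y_{pb}^p/\Delta$, the piece $y_{pb}^p\,\partial_x\Delta\cdot\dot\theta\kappa\,y_{pb}^p$ cancels \emph{identically} against the $\kappa\dot\theta x_{pb}^p\,y_{pb}^p\,\partial_y\Delta$ piece of $\bigl(\Delta-y_{pb}^p\partial_y\Delta\bigr)\kappa\dot\theta x_{pb}^p$, so the entire curvature content of $r_d$ collapses to $\kappa\dot\theta\bigl(1+\tfrac{\Delta x_{pb}^p}{\Delta^2+(y_{pb}^p)^2}\bigr)$ with prefactor at most $2$, exactly offsetting the $\tfrac12$ inside $\dot\theta$; bounding the brackets separately instead leaves a residual $\tfrac{|x_{pb}^p|(y_{pb}^p)^2}{\Delta(\Delta^2+(y_{pb}^p)^2)}$ contribution whose supremum is a positive constant, so your curvature coefficient is already a constant multiple strictly greater than one of $X_\text{max}\kappa_\text{max}$. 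Second, and more seriously, the inter-vessel coupling through $\dot\theta$ in \eqref{eq:nsb:tasks:barycenter:path:update_law}: each $r_{d,i}$ contains $\tfrac12(U_1+U_2)$ with $U_j\le u_{d,\max}+|v_j|$, so summing over $i$ and applying your own Young step gives
\begin{equation*}
    \sum_{i=1}^{2} X_\text{max}\kappa_\text{max}\bigl(|v_1|+|v_2|\bigr)|v_i| \;=\; X_\text{max}\kappa_\text{max}\bigl(|v_1|+|v_2|\bigr)^2 \;\le\; 2X_\text{max}\kappa_\text{max}\bigl(v_1^2+v_2^2\bigr),
\end{equation*}
i.e.\ \emph{twice} the coefficient written in your display. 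With any coefficient $cX_\text{max}\kappa_\text{max}$, $c>1$, negativity of the quadratic form requires $\kappa_\text{max}<Y_\text{min}/(cX_\text{max})$, which is not implied by the hypothesis inherited from \cref{LEM:NSB:ANALYSIS:BARYCENTER:BOUNDEDNESS_X1_X2}; as written, your argument therefore establishes boundedness only under strictly stronger conditions than the lemma states. (For comparison: the paper recovers the sharp factor in front of $\kappa_\text{max}$ via the exact cancellation above, and evades the factor $2$ only by a per-vessel analysis that absorbs $|\kappa||v_i||U_j|$ into a remainder whose coefficient is declared bounded, effectively treating the other vessel's sway as already bounded; the cancellation, at least, is a step your proposal needs to add.)
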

\begin{proof}
\ifincludeproofs
The proof of this lemma is given in Appendix B.
\else
Due to space restrictions, the proof is omitted here but can be found in \textit{Insert ArXiv reference here}.
\fi
\end{proof}
\begin{theorem}\label{th:nsb:analysis:barycenter:barycenter_usges}
Consider a $\theta$-parametrized path denoted by $P(\theta) = \left(x_p(\theta), y_p(\theta)\right)$, with the update law \eqref{eq:nsb:tasks:barycenter:path:update_law} and a system given by two vessels, each described by \eqref{eq:model:component_form}, giving the barycenter kinematics \eqref{eq:nsb:tasks:barycenter:kinematics:barycenter_kinematics_component_form}. Furthermore, let the adaptive controllers \eqref{eq:nsb:analysis:heading_controller} and \eqref{eq:nsb:analysis:surge_controller} be used as autopilots for each of the vessels, with the guidance law \eqref{eq:nsb:tasks:barycenter:guidance:guidance_law_heading}. Then, under the conditions of \crefrange{LEM:NSB:ANALYSIS:BARYCENTER:FORWARD_COMPLETENESS}{LEM:NSB:ANALYSIS:BARYCENTER:BOUNDEDNESS_X2}, the barycenter follows the path $P$ at the desired along-path speed $U_d(t)$ with bounded estimation errors and sway velocity, and the origin of the closed-loop system \eqref{eq:nsb:analysis:barycenter:closed_loop:X1}--\eqref{eq:nsb:analysis:barycenter:closed_loop:X2} is an USGES equilibrium point. 
\end{theorem}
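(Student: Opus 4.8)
The plan is to exploit the cascaded structure of the closed-loop system \eqref{eq:nsb:analysis:barycenter:closed_loop} and apply the cascaded-systems results of \cite{Pettersen2017}. The autopilot error dynamics \eqref{eq:nsb:analysis:barycenter:closed_loop:X2}, together with the adaptation laws \eqref{eq:nsb:analysis:barycenter:closed_loop:theta_r}--\eqref{eq:nsb:analysis:barycenter:closed_loop:theta_u}, form an autonomous driving subsystem in $\tilde{\boldsymbol{X}}_2$ that is independent of $\tilde{\boldsymbol{X}}_1$, whereas the path-following error dynamics \eqref{eq:nsb:analysis:barycenter:closed_loop:X1} constitute the driven subsystem, coupled to the driving states only through the interconnection $G_1(\cdot)$. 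Property \eqref{eq:nsb:tasks:barycenter:guidance:G_1_features:zero} is what makes this a genuine cascade: $G_1$ vanishes identically whenever the driving error states vanish, so setting $\tilde{\boldsymbol{X}}_2=\mathbf{0}$ recovers the nominal driven system.

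First I would establish that the driving subsystem is USGES with bounded adaptation errors, following \cite{Moe2016}. For each vessel (omitting subscripts) I would use the Lyapunov function $V_2 = \tfrac12\tilde{u}^2 + \tfrac{k_\psi}{2}\tilde\psi^2 + \tfrac12 s^2 + \tfrac{1}{2\gamma_u}\tilde{\boldsymbol\theta}_u^T\tilde{\boldsymbol\theta}_u + \tfrac{1}{2\gamma_r}\tilde{\boldsymbol\theta}_r^T\tilde{\boldsymbol\theta}_r$. The adaptation laws \eqref{eq:nsb:analysis:barycenter:closed_loop:theta_r}--\eqref{eq:nsb:analysis:barycenter:closed_loop:theta_u} are designed precisely to cancel the regressor terms $\boldsymbol\phi_u^T\tilde{\boldsymbol\theta}_u$ and $\boldsymbol\phi_r^T\tilde{\boldsymbol\theta}_r$, and the mixed $\tilde\psi\,s$ terms cancel by construction, leaving $\dot V_2 \le -(\tfrac{d_{11}}{m_{11}}+k_u)\tilde u^2 - k_\psi\lambda\,\tilde\psi^2 - k_r s^2 - k_e|\tilde u| - k_d|s|$. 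The sliding-mode terms $k_e\sign(\tilde u)$ and $k_d\sign(s)$ then drive the sliding variables to zero in finite time, after which the reduced dynamics are exponentially stable; this yields $\tilde{\boldsymbol X}_2$ USGES while $\tilde{\boldsymbol\theta}_u,\tilde{\boldsymbol\theta}_r$ remain bounded.

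Next I would analyse the nominal driven subsystem, obtained by setting $\tilde{\boldsymbol X}_2=\mathbf 0$ so that $G_1=0$. With $V_1 = \tfrac12\big((x_{pb}^p)^2 + (y_{pb}^p)^2\big)$ the curvature cross-terms $\dot\theta\kappa(\theta)x_{pb}^p y_{pb}^p$ in \eqref{eq:nsb:analysis:barycenter:closed_loop:X1} cancel exactly, giving
\begin{equation*}
\dot V_1 = -k_\theta\frac{(x_{pb}^p)^2}{\sqrt{1+(x_{pb}^p)^2}} - \frac{U_{d,1}+U_{d,2}}{2}\frac{(y_{pb}^p)^2}{\sqrt{\Delta^2+(y_{pb}^p)^2}},
\end{equation*}
which is negative definite. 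On any ball the denominators are uniformly bounded, so $\dot V_1 \le -c\,V_1$ for a radius-dependent $c>0$, establishing that the nominal $\tilde{\boldsymbol X}_1$ system is USGES.

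Finally I would verify the interconnection hypothesis of the cascade theorem: by \eqref{eq:nsb:tasks:barycenter:guidance:G_1_features:linear}, $G_1(\cdot)$ is bounded by $\zeta_1\,\lVert[\tilde\psi_1,\tilde u_1,\tilde\psi_2,\tilde u_2]^T\rVert$, i.e. it grows at most linearly in the driving state and does not amplify $\tilde{\boldsymbol X}_1$. Combined with forward completeness (\cref{LEM:NSB:ANALYSIS:BARYCENTER:FORWARD_COMPLETENESS}), which rules out finite escape, the cascaded-systems result of \cite{Pettersen2017} then gives that the origin of \eqref{eq:nsb:analysis:barycenter:closed_loop:X1}--\eqref{eq:nsb:analysis:barycenter:closed_loop:X2} is USGES. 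Boundedness of the sway velocity $v_i$ follows from \crefrange{LEM:NSB:ANALYSIS:BARYCENTER:BOUNDEDNESS_X1_X2}{LEM:NSB:ANALYSIS:BARYCENTER:BOUNDEDNESS_X2}, and since $\tilde{\boldsymbol X}_1\to\mathbf 0$ the barycenter converges to $P$ while the update law \eqref{eq:nsb:tasks:barycenter:path:update_law} sustains the desired tangential speed $U_d(t)$. I expect the main obstacle to be propagating the \emph{exponential} (rather than merely asymptotic) estimates uniformly through the cascade while accommodating the sliding-mode discontinuities and the time-varying, curvature-dependent coefficients in \eqref{eq:nsb:analysis:barycenter:closed_loop:X1}; reconciling the finite-time reaching of the sliding manifolds with the USGES machinery of \cite{Pettersen2017} is the delicate point.
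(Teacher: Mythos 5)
Your overall strategy coincides with the paper's: the same cascade decomposition with \eqref{eq:nsb:analysis:barycenter:closed_loop:X2} driving \eqref{eq:nsb:analysis:barycenter:closed_loop:X1} through $G_1$, the same Lyapunov function $V_1=\frac{1}{2}\big((x_{pb}^p)^2+(y_{pb}^p)^2\big)$ for the nominal driven dynamics, the linear-growth property \eqref{eq:nsb:tasks:barycenter:guidance:G_1_features:linear} for the interconnection, and the USGES machinery of \cite{Pettersen2017}. However, there is a genuine gap in the order of your argument: you invoke the cascade theorem first and only afterwards conclude boundedness of the sway velocities from \cref{LEM:NSB:ANALYSIS:BARYCENTER:BOUNDEDNESS_X1_X2,LEM:NSB:ANALYSIS:BARYCENTER:BOUNDEDNESS_X2}. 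This is backwards. The coefficients of the driven subsystem and the interconnection bound $\zeta_1(U_{d,1},U_{d,2})$ both depend on the sway velocities through $U_{d,i}=\sqrt{u_{d,i}^2+v_i^2}$; for the hypotheses of \cite{Pettersen2017} to hold uniformly in time, these must be \emph{bounded} time-varying signals, which requires $v_i$ bounded \emph{before} the cascade analysis can be carried out. The paper resolves this by first combining UGES of \eqref{eq:nsb:analysis:barycenter:closed_loop:X2} (from \cite{Moe2016}) with forward completeness (\cref{LEM:NSB:ANALYSIS:BARYCENTER:FORWARD_COMPLETENESS}) and \cref{LEM:NSB:ANALYSIS:BARYCENTER:BOUNDEDNESS_X2}: after a finite time the trajectory is close enough to the manifold $\tilde{\mathbf{X}}_2=\mathbf{0}$, where \cref{LEM:NSB:ANALYSIS:BARYCENTER:BOUNDEDNESS_X2} guarantees sway boundedness \emph{independently of} $\tilde{\mathbf{X}}_1$ — this independence is exactly why that lemma is stated as it is, since nothing is yet known about $\tilde{\mathbf{X}}_1$ at that stage. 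Only then is the cascade analyzed. As written, your argument uses bounded $v_i$ implicitly (inside $\zeta_1$ and inside the nominal system) to prove the very convergence from which you later deduce that $v_i$ is bounded.

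A second, smaller gap is your treatment of the driving subsystem. The composite Lyapunov function $V_2$ containing $\tilde{\boldsymbol{\theta}}_u$ and $\tilde{\boldsymbol{\theta}}_r$ yields a derivative that is only negative \emph{semi}definite in the full state (there are no $-\lVert\tilde{\boldsymbol{\theta}}\rVert^2$ terms), so it gives boundedness and asymptotic convergence of $(\tilde{u},\tilde{\psi},s)$, not an exponential estimate. Your subsequent claim that the sliding variables reach zero in finite time does not follow from the cancellation in $\dot V_2$: in the \emph{dynamics} the terms $\boldsymbol{\phi}_u^T\tilde{\boldsymbol{\theta}}_u$ and $\boldsymbol{\phi}_r^T\tilde{\boldsymbol{\theta}}_r$ are still present, and finite-time reaching (or a direct bound $\dot V\le -cV$) requires the gain-domination conditions $k_e>\sup|\boldsymbol{\phi}_u^T\tilde{\boldsymbol{\theta}}_u|$ and $k_d>\sup|\boldsymbol{\phi}_r^T\tilde{\boldsymbol{\theta}}_r|$, which in turn presuppose the boundedness of the estimation errors. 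This is precisely the content of \cite[Proposition 1]{Moe2016}, which the paper simply cites for UGES of \eqref{eq:nsb:analysis:barycenter:closed_loop:X2} and boundedness of \eqref{eq:nsb:analysis:barycenter:closed_loop:theta_r}--\eqref{eq:nsb:analysis:barycenter:closed_loop:theta_u}; if you re-derive it, the gain conditions must be stated explicitly. With the ordering corrected and the driving-subsystem UGES properly justified, the rest of your argument (nominal USGES of $\tilde{\mathbf{X}}_1$ on balls, verification of the growth condition on $G_1$, and application of \cite{Pettersen2017}) matches the paper's proof.
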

\begin{proof}
    The proof follows along the lines of \cite[Proof of Theorem 1]{Belleter2019} for single vessel control but extended to two vessels and making use of the results in \cite{Pettersen2017} to prove USGES.

    First, consider the unactuated sway-dynamics \eqref{eq:nsb:analysis:barycenter:closed_loop:sway}. From \cite[Proposition 1]{Moe2016} the origin of \eqref{eq:nsb:analysis:barycenter:closed_loop:X2} is UGES. Moreover, by \cref{LEM:NSB:ANALYSIS:BARYCENTER:FORWARD_COMPLETENESS} the closed-loop system \eqref{eq:nsb:analysis:barycenter:closed_loop} is forward complete and thus the sway-dynamics \eqref{eq:nsb:analysis:barycenter:closed_loop:sway} is bounded near the manifold $\tilde{\mathbf{X}}_2 = 0$. Thus, we can conclude that there exists a finite time $T > t_0$ such that the solutions of \eqref{eq:nsb:analysis:barycenter:closed_loop:X2} will be sufficiently close to $\tilde{\mathbf{X}}_2 = 0$ to guarantee boundedness of $v_i$.
    
    Having established that the sway dynamics are bounded, we will now utilize cascaded theory to analyze the cascade \eqref{eq:nsb:analysis:barycenter:closed_loop:X1}--\eqref{eq:nsb:analysis:barycenter:closed_loop:X2}, where \eqref{eq:nsb:analysis:barycenter:closed_loop:X2} perturbs the nominal dynamics \eqref{eq:nsb:analysis:barycenter:closed_loop:X1} through the interconnection term $G_1(\cdot)$. Note that the estimation errors, and also the sway velocity which affects \eqref{eq:nsb:analysis:barycenter:closed_loop:X1}--\eqref{eq:nsb:analysis:barycenter:closed_loop:X2} through $U_d$, can be treated as time-varying signals in the following analysis since the the system is forward complete by \cref{LEM:NSB:ANALYSIS:BARYCENTER:FORWARD_COMPLETENESS}.

    First, consider the nominal dynamics given by the first term of \eqref{eq:nsb:analysis:barycenter:closed_loop:X1}. Taking the derivatives of the positive definite $\mathcal{C}^1$ Lyapunov function candidate
    \begingroup
\setlength{\belowdisplayskip}{1pt}
\setlength{\abovedisplayskip}{1pt}
    \begin{equation}
        V\left(\tilde{\mathbf{X}}_1\right) = \frac{1}{2}\left(x_{pb}^p\right)^2 + \frac{1}{2}\left(y_{pb}^p\right)^2,\label{eq:nsb:analysis:barycenter:barycenter_usges:lfc}
    \end{equation}
    \endgroup
    along the trajectories of \eqref{eq:nsb:analysis:barycenter:closed_loop:X1} gives
    \begin{equation}
        \dot{V} = -\tilde{\boldsymbol{X}}_1^T \mathbf{Q}\tilde{\boldsymbol{X}}_1 <0,\label{eq:nsb:tasks:barycenter:guidance:lyapunov_function_derivative}
    \end{equation}
    for which 
    \begingroup
    \setlength{\belowdisplayskip}{1pt}
    \begin{equation}
        \mathbf{Q} = \begin{bmatrix}
           \frac{k_\theta}{\sqrt{1 + \tilde{X}_{11}^2}} & 0  \\
           0 & \frac{1}{2}\frac{U_{d,1} + U_{d,2}}{\sqrt{\mu + \tilde{X}_{11}^2 +2 \tilde{X}_{21}^2}} 
           \end{bmatrix}>0
    \end{equation}
    \endgroup
    is a positive definite matrix as $k_\theta, U_{d,1}, U_{d,2}>0$,
    implying that $\dot{V}$ is negative definite, and that the nominal system is UGAS. Furthermore, to investigate USGES, the following bound can be verified to hold $\forall \tilde{\boldsymbol{X}}_1\in \mathcal{B}_r$
    \begin{equation}
        \dot{V} \leq -q_\text{min}\lVert\tilde{\boldsymbol{X}}_1\rVert^2,
    \end{equation}
    with
    \begin{equation}
        q_\text{min} \triangleq \lambda_\text{min}\left( \begin{bmatrix}
        \frac{k_\theta}{\sqrt{1 + r^2}} & 0  \\
        0 & \frac{1}{2}\frac{U_{d,1} + U_{d,2}}{\sqrt{\mu + 3 r^2}} 
        \end{bmatrix}\right)
    \end{equation}
    for any ball $\mathcal{B}_r \triangleq \left\{\max\{|\tilde{X}_{11}|, |\tilde{X}_{21}|\} < r\right\}$, $r>0$, where $\lambda_\text{min}(\mathbf{A})$ is defined as the minimum eigenvalue of $\mathbf{A}$. Thus, the conditions of \cite[Theorem 5]{Pettersen2017} is fulfilled with $k_1 = k_2 = \frac{1}{2}$, $a=2$ and $k_3 = q_\text{min}$, and USGES can be concluded for the origin of the nominal system given by the first term of \eqref{eq:nsb:analysis:barycenter:closed_loop:X1}.

    The perturbing system \eqref{eq:nsb:analysis:barycenter:closed_loop:X2} is proven UGES in \cite{Moe2016}, implying both UGAS and USGES. The conditions of \cite[Theorem 5]{Pettersen2017} are therefore trivially satisfied for the perturbing system.
    
    The existence of positive constants $c_1, c_2, \eta>0$ satisfying \cite[Assumption 1]{Pettersen2017} is clearly satisfied by $V$ in \eqref{eq:nsb:analysis:barycenter:barycenter_usges:lfc}:
    \begingroup
\setlength{\belowdisplayskip}{1pt}
    \begin{align}
        \left\lVert\frac{\partial V}{\partial \tilde{\mathbf{X}}_1}\right\rVert\!\left\lVert\tilde{\mathbf{X}}_1\right\rVert\!&=\!\left\lVert\left[x_{pb}^p, y_{pb}^p\right]^T\right\rVert^2\!=\!2V\left(\tilde{\mathbf{X}}_1\right)\;\forall \left\lVert\tilde{\mathbf{X}}_1\right\rVert\\
        \left\lVert \frac{\partial V}{\partial \tilde{\mathbf{X}}_1}\right\rVert &= \left\lVert\tilde{\mathbf{X}}_1\right\rVert \leq \eta \quad \forall \left\lVert\tilde{\mathbf{X}}_1\right\rVert \leq \eta,
    \end{align}
    \endgroup
    i.e. with $c_1 = 2$ and $c_2 = \eta$ for any choice $\eta > 0$. Finally, the conditions of \cite[Assumption 2]{Pettersen2017} must be investigated, i.e. the assumption that the interconnection terms, the second vector of \eqref{eq:nsb:analysis:barycenter:closed_loop:X1}, has at most linear growth in $\tilde{\mathbf{X}}_1$. From \eqref{eq:nsb:tasks:barycenter:guidance:G_1_features:linear} it can be seen that the interconnection term does not grow with the states $\tilde{\mathbf{X}}_1$ as it can be bounded by linear functions of $\tilde{\mathbf{X}}_2$. All conditions of \cite[Proposition 9]{Pettersen2017} are therefore satisfied, and the origin of the closed-loop system \eqref{eq:nsb:analysis:barycenter:closed_loop:X1}--\eqref{eq:nsb:analysis:barycenter:closed_loop:X2} $\left(\tilde{\mathbf{X}}_1, \tilde{\mathbf{X}}_2\right) = (\mathbf{0}, \mathbf{0})$, is USGES and UGAS. 
    
    Boundedness of \crefrange{eq:nsb:analysis:barycenter:closed_loop:theta_r}{eq:nsb:analysis:barycenter:closed_loop:theta_u} is established in the proof of \cite[Proposition 1]{Moe2016} where it is shown that both $\tilde{\boldsymbol{\theta}}_r$ and $\tilde{\boldsymbol{\theta}}_r$ are bounded. This concludes the proof of \cref{th:nsb:analysis:barycenter:barycenter_usges}.
\end{proof}

\section{SIMULATIONS}\label{sec:sim}
In this section, we present the results from numerical simulations of two identical underactuated USVs modeled by \eqref{eq:model:component_form}, subject to a constant irrotational ocean current $\mathbf{V}_c\triangleq\left[-0.707, -0.707, 0\right]^T$. The NSB collision avoidance and vessel formation task objectives \eqref{eq:control_objectives:collision_avoidance_objective}--\eqref{eq:control_objectives:formation_objective} are specified by $\sigma_{ca, d} = \SI{20}{\metre}$ and $\boldsymbol{\sigma}_{f,d} = [0, \SI{20}{\metre}]^T$, with the control gains $\lambda_{ca} = 1$ and $\boldsymbol{\Lambda}_f^p = \diag(2.5, 0.3)$, respectively. The desired surge speed is chosen constant as $u_d = \SI{3}{\metre\per\second}$ and the desired path to follow is defined as 
\begin{equation}
    P \triangleq \begin{cases}
    x_p(\theta) = \theta\\
    y_p(\theta) = 300\sin\left(0.005\theta\right),
    \end{cases}
\end{equation}
which satisfies the condition of \Cref{LEM:NSB:ANALYSIS:BARYCENTER:BOUNDEDNESS_X1_X2} as $\max_{\theta \in P}|\kappa(\theta)| = 0.0075 < Y_\text{min}/X_\text{max} \approx 0.0882$. Furthermore, the required bound for $\mu$ to satisfy the condition of \Cref{LEM:NSB:ANALYSIS:BARYCENTER:BOUNDEDNESS_X2} can be calculated to be $\mu > \SI{49.5704}{\metre}$, which is satisfied by choosing $\mu = \SI{50}{\metre}$. The controller gains are chosen as $k_\psi = 1.2$, $k_r = 1.3$, $\lambda = 100$, $k_d = 10$, $k_u = 0.1$ and $k_e = 0.1$, with the adaptive gains are chosen as $\gamma_r = 5$ and $\gamma_u = 1$.
\begin{figure}[htbp]
    \centering
    \includegraphics[width=\columnwidth]{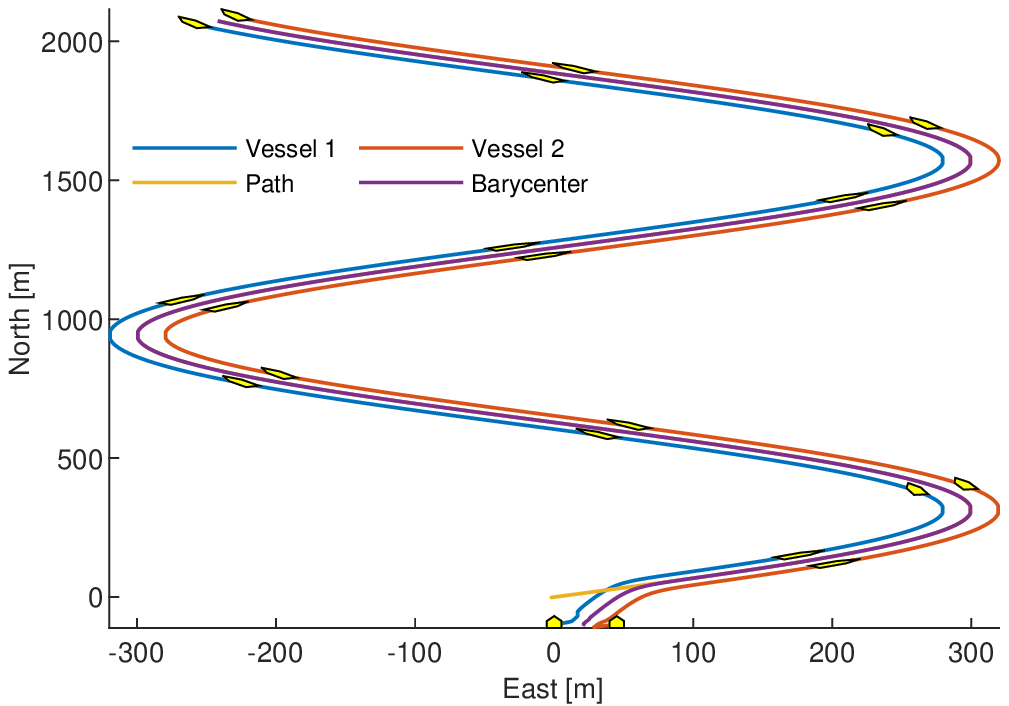}
    \vspace*{-7mm}
    \caption{Path following of the desired sinusiodal path.}
    \label{fig:sim:ideal:all_tasks:ned}
\end{figure}

The resulting trajectories of both vessels and the barycenter trajectory are shown in \cref{fig:sim:ideal:all_tasks:ned}. It can be observed that the vessels maintain their desired formation while making the barycenter follow the desired path. It can also be seen that the curved path and the ocean currents make both vessels operate with a non-zero sideslip angle. The path following errors of the three tasks can be seen in \cref{fig:sim:ideal:all_tasks:error} which shows the barycenter task errors converging to zero. Moreover, it can be observed that the vessel formation task errors grow during the turns while converging towards zero elsewhere.
\begin{figure}[htbp]
    \centering
    \includegraphics[width=\columnwidth]{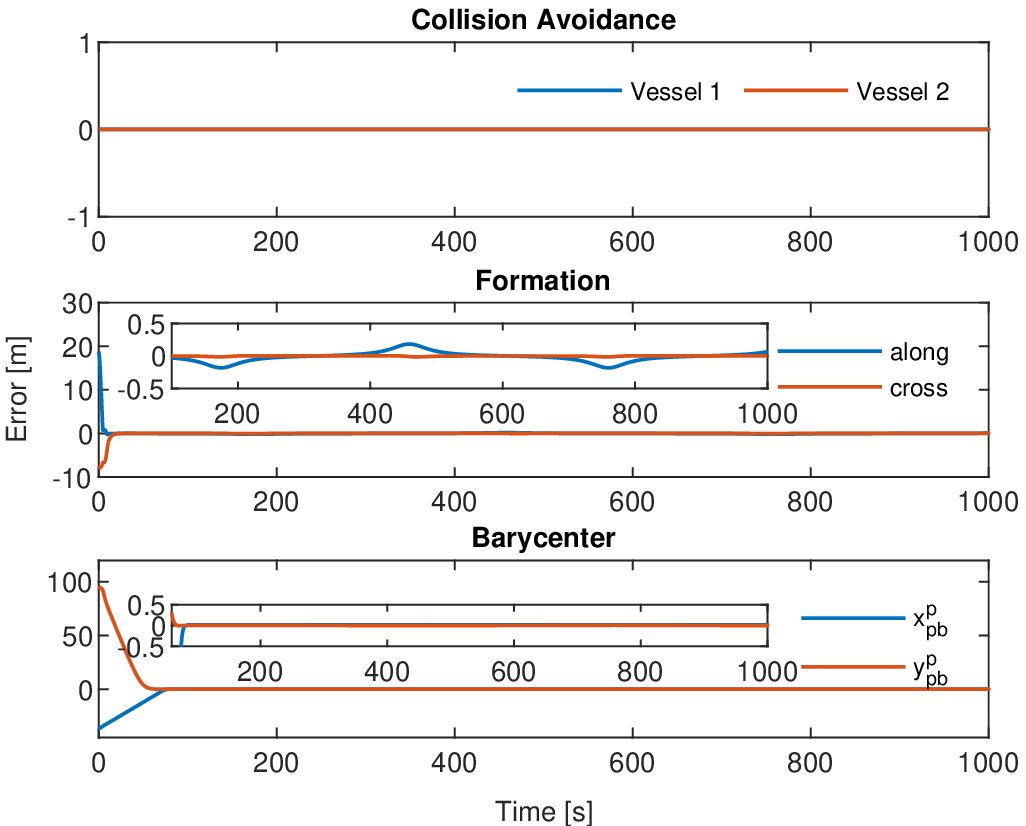}
    \vspace*{-7mm}
    \caption{NSB errors of the desired sinusiodal path.}
    \label{fig:sim:ideal:all_tasks:error}
\end{figure}

\vspace*{-3mm}
\section{EXPERIMENTS}
In this section, results from experiments at sea are presented. The experiments were performed using the Odin and Frigg USVs which are under development by FFI. They are $\SI{11}{\metre}$ long and $\SI{3.5}{\metre}$ wide and propelled by a dual waterjet system. However, at maneuvering speeds, the waterjets are linked together, rendering the system underactuated. 

We did not have the opportunity to implement new autopilots at Odin and Frigg, so the adaptive autopilots \eqref{eq:nsb:analysis:heading_controller}--\eqref{eq:nsb:analysis:surge_controller} could not be implemented. Instead, the existing surge and heading autopilots had to be used. These are PI and PD controllers for surge and heading, respectively, which are tuned to give asymptotic stability. 

The desired along-path speed was chosen constant as $u_d = \SI{3}{\metre\per\second}$, while the lookahead distance was chosen as $\mu = \SI{100}{\metre}$. The NSB task objectives of the collision avoidance and vessel formation tasks were chosen according to \cref{sec:sim}, with the NSB task gains chosen as $\lambda_{ca} = 1$ and $\boldsymbol{\Lambda}_f^p = \diag(0.3, 0.1)$, respectively.
\begin{figure}[htbp]
    \centering
    \includegraphics[width=\columnwidth]{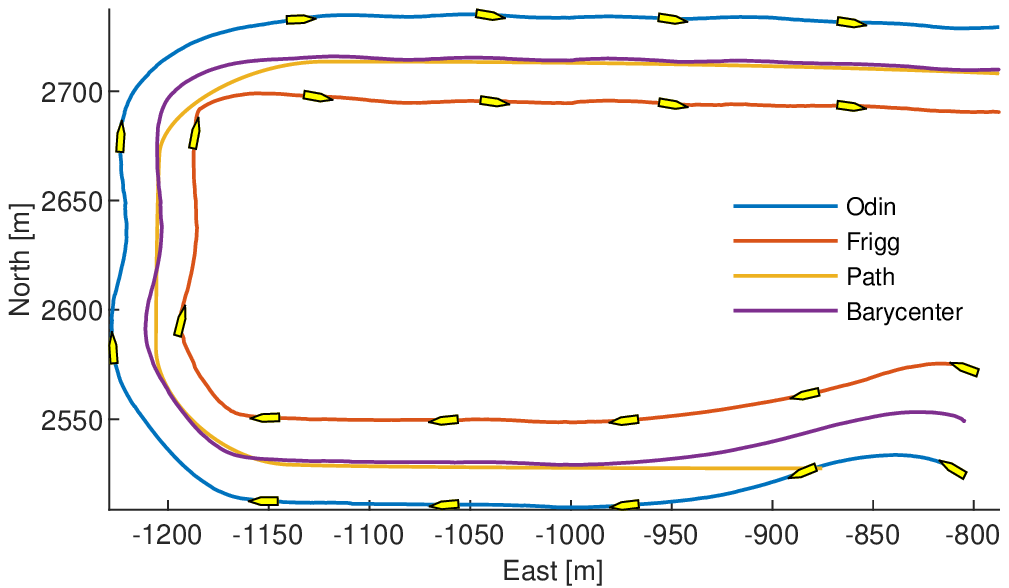}
    \vspace*{-7mm}
    \caption{The vessels positions along with desired and actual barycenter path. }
    \label{fig:exp:case:1:ned}
\end{figure}

The resulting motion of the vessels is shown in \cref{fig:exp:case:1:ned}. The vessels maintained the desired formation on the straight-line path segments, while the error was bounded during turns, as shown in \cref{fig:exp:case:1:error_odin}. For the barycenter task, a small cross-track error of $1 - \SI{2}{\metre}$ can be observed. We believe that the reason for this deviation is the lack of ocean current adaptation in the existing autopilots, and that the adaptation of \eqref{eq:nsb:analysis:heading_controller}--\eqref{eq:nsb:analysis:surge_controller} could significantly reduce the cross-track error as predicted by theory and illustrated in the simulations. Moreover, rather large vessel formation task errors can be observed, especially during the second turn. We believe that these errors are caused by sub-optimal NSB task gains caused by the limited timespan in which the experiments had to be executed, and that these errors could be significantly reduced in a well-tuned system. 
\begin{figure}[htbp]
    \centering
    \includegraphics[width=\columnwidth]{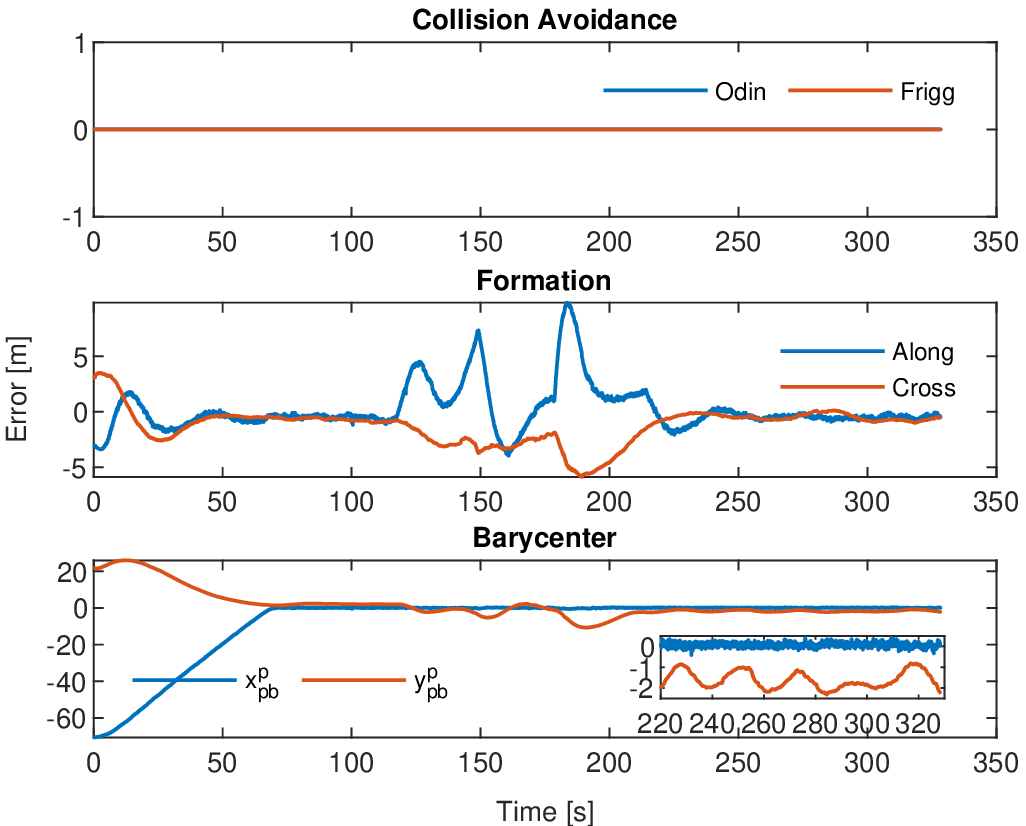}
    \vspace*{-7mm}
    \caption{NSB task errors.}
    \label{fig:exp:case:1:error_odin}
\end{figure}

\ifincludeproofs
\else
\addtolength{\textheight}{-2cm}
\fi

\vspace*{-3mm}
\section{CONCLUSION}
In this paper we have proposed a formation control method for two underactuated USVs to follow curved paths in the presence of ocean currents. This is achieved by integrating a LOS guidance law for curved path following of the barycenter into the NSB framework together with the additional two tasks: collision avoidance and vessel formation. We prove that the proposed LOS guidance law combined with adaptive feedback linearizing controllers with sliding mode for the surge and yaw autopilots, achieves convergence to the desired path, and that the closed-loop system is USGES and UGAS while the underactuated sway dynamics remains bounded. The resulting control system requires only traditional sensors for estimating absolute velocities, such as IMU and GNSS, removing the need of expensive sensors for measuring relative velocities. Both simulation and experimental results are presented to validate the theoretical results. 
\endgroup




\vspace*{-2mm}
\useRomanappendicesfalse
\appendices%
\section{}\label{app:model_expressions}%
\begingroup
\setlength{\belowdisplayskip}{0pt}
\setlength{\abovedisplayskip}{0pt}
\vspace*{-3mm}
\begin{equation}
        \boldsymbol{\phi}_u^T(\psi, r)\!=\!\begin{bmatrix}
        - \frac{d_{11} + 2d_{11}^q u}{m_{11}}\cos(\psi) - \frac{m_{11}^A - m_{22}^A}{m_{11}}r\sin(\psi)\\
        - \frac{d_{11} + 2d_{11}^q u}{m_{11}}\sin(\psi) + \frac{m_{11}^A - m_{22}^A}{m_{11}}r\cos(\psi)\\
        -d_{11}^q\cos^2(\psi)\\
        -d_{11}^q\sin^2(\psi)\\
        -2d_{11}^q\cos(\psi)\sin(\psi)
        \end{bmatrix}\!\!
\end{equation}
\begin{align}
&X(u, u_c) = \frac{1}{\Gamma}\big(m_{33}\left(-d_{23}-m_{11}(u - u_c) - m_{11}^{RB}u_c\right)\nonumber\\
+ &m_{23}d_{33} + m_{23}\left(m_{23}(u - u_c)+m_{23}^{RB}u_c + m_{22}^Au_c\right)\big)\!\!\label{eq:appendix:vessel_model_expressions:X}
\end{align}
\begin{align}
Y(u, u_c) &= \frac{1}{\Gamma}\big(-m_{33}d_{22} + m_{23}d_{32} \nonumber\\
&\hphantom{{}=} + m_{23}\left(m_{22}^A - m_{11}^A\right) (u - u_c)\big)\label{eq:appendix:vessel_model_expressions:Y}
\end{align}
\begin{align}
&F_r(u, v, r) = - \frac{m_{23}}{\Gamma}\left(-m_{11}ru - d_{22}v - d_{23}r\right) +\frac{m_{22}}{\Gamma}\big( \nonumber\\
&\hphantom{{}=} -\big(m_{22}v- m_{23}r\big)u + m_{11}uv - d_{32}v - d_{33}r\big)
\end{align}
where, $\Gamma = m_{22}m_{33}-m_{23}^2>0$. Further, the function $\boldsymbol{\phi}_r^T(u, v, r, \psi) = \left[\phi_{r1}, \dots, \phi_{r5}\right]$ is given by
\begingroup
\allowdisplaybreaks
\begin{align}
\begin{bmatrix}\phi_{r1}\\\phi_{r2}\end{bmatrix} &= \begin{bmatrix}
\cos(\psi) & -\sin(\psi)\\
\sin(\psi) & \cos(\psi)
\end{bmatrix} \begin{bmatrix}a_1\\a_2\end{bmatrix}\\
\phi_{r3} &= -\frac{m_{22}}{\Gamma}\left(m_{11}^A - m_{22}^A\right)\cos(\psi)\sin(\psi)\\
\phi_{r4} &= \frac{m_{22}}{\Gamma}\left(m_{11}^A - m_{22}^A\right)\cos(\psi)\sin(\psi)\\
\phi_{r5} &=  \frac{m_{22}}{\Gamma}\left(m_{11}^A - m_{22}^A\right)\left(1 - 2\sin^2(\psi)\right)
\end{align}
\endgroup
where
\begingroup
\allowdisplaybreaks
\begin{align}
a_1 &= \frac{m_{22}}{\Gamma}\!\left(\!\left(m_{11}^A\!-\! m_{22}^A\!\right)v\!+\!\!\left(m_{23}^A\!-\! m_{22}^A\!\right)r\!\right)\!-\!\frac{m_{23}}{\Gamma}m_{11}^Ar\!\!\\
a_2 &= \frac{m_{22}}{\Gamma}\left(d_{32} - \left(m_{11}^A-m_{22}^A\right)u\right) - \frac{m_{23}}{\Gamma}d_{22}.
\end{align}
\endgroup

\section*{ACKNOWLEDGMENT}
The authors would like to thank Jarle Sandrib, Geir Lofsberg, and Fredrik Hermansen at FFI for their valuable help and assistance during the experiments.
\endgroup

\ifincludeproofs
\onecolumn
\section{}
\subsection{Proof of Lemma 1}
\allowdisplaybreaks
The proof follows along the lines of \cite[Lemma 1]{Belleter2018Proofs} but is extended to two vessels described by \eqref{eq:model:component_form} expressed in terms of absolute velocities, with the adaptive controllers \eqref{eq:nsb:analysis:heading_controller} - \eqref{eq:nsb:analysis:surge_controller}.

First, consider forward completeness of the underactuated sway dynamics \eqref{eq:nsb:analysis:barycenter:closed_loop:sway}. From the boundedness of the vector $[\tilde{\mathbf{X}}_{2,i}^T, \kappa(\theta), u_{d,i}, \dot{u}_{d,i}, u_c, v_c]^T$ there exist some scalar $\beta_0\in\mathbb{R}_{>0}$ such that $\lVert[\tilde{\mathbf{X}}_{2,i, \kappa(\theta), u_{d,i}, \dot{u}_{d,i}, u_c, v_c}^T]^T\rVert\leq \beta_0$. Moreover, from \eqref{eq:nsb:analysis:barycenter:r_d} we can concluded the existence of some positive functions $a_{r_d}(\cdot)$ and $b_{r_d}(\cdot)$ such that 
\begin{equation}
    |r_d(\cdot)| \leq a_{r_d}(\mu, \beta_0) |v| + b_{r_d}(\mu, \beta_0).
\end{equation}
Then, choosing the Lyapunov function candidate (LFC), omitting subscripts for simplicity
\begin{equation}
    V_1(v) = \frac{1}{2}v^2,
\end{equation}
whose time derivative along the solutions of \eqref{eq:nsb:analysis:barycenter:closed_loop:sway} is:
\begin{align}
    \dot{V}_1(v) &= X(u_d + \tilde{u}, u_c)r_dv + X(u_d + \tilde{u}, u_c)\tilde{r} v\nonumber\\
        &\hphantom{{}=}+ Y(u_d + \tilde{u}, u_c)v^2 - Y(u_d + \tilde{u}, u_c)v_cv
\end{align}
Using Young's inequality, we conclude that the following bound holds:
\begin{align}
    \dot{V}_1(v) &\leq  Y(u_d + \tilde{u}, u_c)v^2 + X(u_d + \tilde{u}, u_c)(\tilde{r}^2+ v^2)\nonumber\\
        &\hphantom{{}=}+ X(u_d + \tilde{u}, u_c)(r_d^2 + v^2) - Y(u_d + \tilde{u}, u_c)(v_c^2 + v^2)\\
        &\leq \alpha V + \beta\label{eq:nsb:analysis:barycenter:forward_completeness:sway_lyapunov_bounded}
\end{align}
where $\alpha\in\mathbb{R}_{\geq0}$, $\beta\in\mathbb{R}_{\geq0}$ are positive scalars. As \eqref{eq:nsb:analysis:barycenter:forward_completeness:sway_lyapunov_bounded} is a scalar system, the comparison lemma \cite[Lemma 3.4]{Khalil2002} may be used to bound the solutions of \eqref{eq:nsb:analysis:barycenter:forward_completeness:sway_lyapunov_bounded} by the scalar linear system
\begin{equation}
    \dot{x} = \alpha x + \beta
\end{equation}
whose solution is equal to 
\begin{equation}
    x(t) = \frac{\lVert x(t_0)\rVert\alpha + \beta}{\alpha}e^{\alpha(t-t_0)} - \frac{\beta}{\alpha}.
\end{equation}
Hence, by the comparison lemma, the solutions of \eqref{eq:nsb:analysis:barycenter:forward_completeness:sway_lyapunov_bounded} must be upper bounded by
\begin{equation}
    V_1(v) \leq\frac{\lVert x(t_0)\rVert\alpha + \beta}{\alpha}e^{\alpha(t-t_0)} - \frac{\beta}{\alpha}.
\end{equation}
As $V(v)$ is defined for all $t$ up to $t_\text{max} = \infty$, it follows that $v$ must also be defined up to $t_\text{max} = \infty$. In the same way as \cite{Belleter2018Proofs}, the solutions of \eqref{eq:nsb:analysis:barycenter:closed_loop:sway} thus fulfills the definition of forward completeness in \cite{Angeli1999} and forward completeness of the solution of \eqref{eq:nsb:analysis:barycenter:closed_loop:sway} can be concluded.

Forward completeness of the closed-loop system \eqref{eq:nsb:analysis:barycenter:closed_loop:X2} - \eqref{eq:nsb:analysis:barycenter:closed_loop:theta_u} is established in \cite[Proposition 1]{Moe2016}. Having established forward completeness of \eqref{eq:nsb:analysis:barycenter:closed_loop:X2} - \eqref{eq:nsb:analysis:barycenter:closed_loop:sway}, only the forward completeness of \eqref{eq:nsb:analysis:barycenter:closed_loop:X1} remains before forward completeness may be concluded for the whole closed-loop system \eqref{eq:nsb:analysis:barycenter:closed_loop}. To show forward completeness of the along- and cross-track error dynamics, consider the LFC
\begin{equation}
    V_2 = \frac{1}{2}\left(x_{pb}^p\right)^2 + \frac{1}{2}\left(y_{pb}^p\right)^2,
\end{equation}
whose derivative along the solutions of \eqref{eq:nsb:analysis:barycenter:closed_loop:X1} is 
\begin{align}
    \dot{V}_2 &= - k_\theta \frac{\left(x_{pb}^p\right)^2}{\sqrt{1 + \left(x_{pb}^p\right)^2}} -\frac{1}{2}\left(U_{d,1} + U_{d,2}\right)\frac{\left(y_{pb}^p\right)^2}{\sqrt{\Delta^2 + \left(y_{pb}^p\right)^2}} + G_1(\cdot)y_{pb}^p\\
    &\leq G_1(\cdot)y_{pb}^p + \left(x_{pb}^p\right)^2
\end{align}
Using Young's inequality, along with the \eqref{eq:nsb:tasks:barycenter:guidance:G_1_features} we obtain the following bound 
\begin{align}
    \dot{V}_2 &\leq V_2 + \frac{1}{2}\zeta_1^2\left(U_{d,1}, U_{d,2}\right) \left\lVert[\tilde{\psi}_1, \tilde{u}_1, \tilde{\psi}_2, \tilde{u}_2]^T\right\rVert^2\label{eq:nsb:analysis:barycenter:forward_completeness:V2_bounded}\\
    &\leq V_2 + \sigma_2\left(v_1, v_2, \tilde{\psi}_1, \tilde{u}_1, \tilde{\psi}_2, \tilde{u}_2\right),
\end{align}
where $\sigma_2(\cdot)\in\mathcal{K}_\infty$. By viewing the arguments of $\sigma_2(\cdot)$ as inputs to the along- and cross-track error dynamics, \cite[Corollary 2.11]{Angeli1999} is satisfied by \eqref{eq:nsb:analysis:barycenter:forward_completeness:V2_bounded} and forward completeness of the solutions of \eqref{eq:nsb:analysis:barycenter:closed_loop:X1} can be concluded. Similarly to \cite{Belleter2018Proofs}, the arguments of $\sigma_2(\cdot)$ are all forward complete, and are therefore valid input signals according to \cite{Angeli1999}. Forward completeness for the whole closed-loop system \eqref{eq:nsb:analysis:barycenter:closed_loop} is therefore established, concluding the proof of \cref{LEM:NSB:ANALYSIS:BARYCENTER:FORWARD_COMPLETENESS}.

\subsection{Proof of Lemma 2}
This proof follows along the lines of \cite[Lemma 2]{Belleter2018Proofs} but is extended to two vessels described by \eqref{eq:model:component_form} expressed in terms of absolute velocities, with the adaptive controllers \eqref{eq:nsb:analysis:heading_controller} - \eqref{eq:nsb:analysis:surge_controller}.

To prove boundedness of $v$ near the manifold $(\tilde{\mathbf{X}}_1, \tilde{\mathbf{X}}_2) = \mathbf{0}$, recall the sway dynamics \eqref{eq:nsb:analysis:barycenter:closed_loop:sway}:
\begin{align}
    \dot{v}_i &= X(u_{d,i} + \tilde{u}_i, u_c)r_{d_i} + X(u_{d,i} + \tilde{u}_i, u_c)\tilde{r}_i \nonumber\\
    &\hphantom{{}=}+ Y(u_{d,i} + \tilde{u}_i, u_c)v_i - Y(u_{d,i} + \tilde{u}_i, u_c)v_c.
\end{align}
We then consider the Lyapunov function candidate $V(v_i) = \frac{1}{2}v_i^2$, whose time derivative along the solutions of \eqref{eq:nsb:analysis:barycenter:closed_loop:sway} is:
\begin{align}
    \dot{V} &= v_i\dot{v}_i =  X(u_{d,i} + \tilde{u}_i, u_c)r_{d_i}v_i + X(u_{d,i} + \tilde{u}_i, u_c)\tilde{r}_i v_i \nonumber\\
    &\hphantom{{}=}+ Y(u_{d,i} + \tilde{u}_i, u_c)v_i^2 - Y(u_{d,i} + \tilde{u}_i, u_c)v_c v_i\\
    &\leq X(u_{d,i}, u_c)r_{d_i} v_i + a_x\tilde{u}_i r_{d,i} v_i + X(u_{d,i}, u_c)\tilde{r}_i v_i+ a_x\tilde{u}_i \tilde{r}_i v_i\nonumber\\
    &\hphantom{{}=}+ Y(u_{d,i}, u_c)v_i^2 + a_y \tilde{u}_i v_i^2 - Y(u_{d,i}, u_c)v_c v_i - a_y \tilde{u}_i v_c v_i.\label{eq:appendix:proofs:boundedness_x1_x2:v_dot_before_inserting_rd_v}
\end{align}
Here, we have used the following properties of  $X(u, u_c)$ and $Y(u, u_c)$ from \eqref{eq:appendix:vessel_model_expressions:X} - \eqref{eq:appendix:vessel_model_expressions:Y}:
\begin{align}
    X(u, u_c) &= a_x u + b_x u_c + c_x \\
    Y(u, u_c) &= a_y u + b_y u_c + c_y.
\end{align}
Next, to find an upper bound of the term $r_{d,i} v_i$ in \eqref{eq:appendix:proofs:boundedness_x1_x2:v_dot_before_inserting_rd_v}, we substitute the expression for $r_d$ from \eqref{eq:nsb:analysis:barycenter:r_d}, omitting subscripts for simplicity:
\begin{align}
    r_d v &= \kappa(\theta) \dot{\theta} v + \frac{\dot{u}_d }{u_d^2 + v^2} v^2 - \frac{u_d \dot{v}v}{u_d^2 + v^2}\nonumber\\
    &\hphantom{{}=}- \frac{v}{\Delta^2 + \left(y_{pb}^p\right)^2}\left[\Delta \dot{y}_{pb}^p - y_{pb}^p\left(\frac{\partial \Delta}{\partial x_{pb}^p}\dot{x}_{pb}^p + \frac{\partial \Delta}{\partial y_{pb}^p}\dot{y}_{pb}^p\right)\right] \\
    &= \kappa(\theta) v \left(\frac{1}{2}U_1\cos\left(\chi_1 - \gamma_p\right) + \frac{1}{2}U_2\cos\left(\chi_2 - \gamma_p\right) + \frac{k_\theta x_{pb}^p}{\sqrt{1 + \left(x_{pb}^p\right)^2}} \right) \nonumber\\
    &\hphantom{{}=} + \frac{\dot{u}_d }{u_d^2 + v^2} v^2 - \frac{u_dv}{u_d^2 + v^2}\Big(X(u, u_c) r + Y(u,u_c)v - Y(u,u_c)v_c\Big)\nonumber\\
    &\hphantom{{}=} -\frac{\Delta v}{\Delta^2 + \left(y_{pb}^p\right)^2}\left(- \frac{1}{2}\left(U_{d,1} + U_{d,2}\right)\frac{y_{pb}^p}{\sqrt{\Delta^2 + \left(y_{pb}^p\right)^2}} - \kappa(\theta)\dot{\theta}x_{pb}^p + G_1(\cdot)\right)\nonumber\\
    &\hphantom{{}=} +\frac{y_{pb}^p v}{\Delta^2 + \left(y_{pb}^p\right)^2}\bigggg[ \frac{\partial \Delta}{\partial x_{pb}^p}\left(- \frac{k_\theta x_{pb}^p}{\sqrt{1 + \left(x_{pb}^p\right)^2}} + \dot{\theta}\kappa(\theta)y_{pb}^p\right)\nonumber\\
    &\hphantom{{}=} + \frac{\partial \Delta}{\partial y_{pb}^p}\left(- \frac{1}{2}\left(U_{d,1} + U_{d,2}\right)\frac{y_{pb}^p}{\sqrt{\Delta^2 + \left(y_{pb}^p\right)^2}} - \kappa(\theta)\dot{\theta}x_{pb}^p + G_1(\cdot)\right)\bigggg].\label{eq:appendix:proofs:boundedness_x1_x2:r_dv_before_extracting_F}
\end{align}
Now, we introduce a term $F(\tilde{\mathbf{X}}_1, \tilde{\mathbf{X}}_2, \Delta, \theta, u_d, \dot{u}_d, v, v_c, u_c, r)$ to collect all terms that grows linearly with $v$ and the terms that grow quadratically with $v$ but 
vanish when $\tilde{\mathbf{X}}_1$ and $\tilde{\mathbf{X}}_2$ are zero:
\begin{align}
     r_d v &= v \left[1 + \frac{\Delta x_{pb}^p}{\Delta^2 + \left(y_{pb}^p\right)^2}\right] \kappa(\theta) \left(\frac{1}{2}U_1\cos\left(\chi_1 - \gamma_p\right) + \frac{1}{2}U_2\cos\left(\chi_2 - \gamma_p\right)\right)\nonumber\\
     &\hphantom{{}=} -\frac{u_d}{u_d^2 + v^2} Y(u, u_c)v^2 + F(\tilde{\mathbf{X}}_1, \tilde{\mathbf{X}}_2, \Delta, \theta, u_d, \dot{u}_d, v, v_c, u_c, r),\label{eq:appendix:proofs:boundedness_x1_x2:r_dv_after_extracting_F}
\end{align}
where the expression for $\dot{\theta}$ has been inserted in the second last term on the third line in \eqref{eq:appendix:proofs:boundedness_x1_x2:r_dv_before_extracting_F} to extract the second term in the first parenthesis on the first line of \eqref{eq:appendix:proofs:boundedness_x1_x2:r_dv_after_extracting_F} and $F(\cdot)$ is given by
\begin{align}
    F(\cdot) &= v \bigggg[\kappa(\theta) \frac{k_\theta x_{pb}^p}{\sqrt{1 + \left(x_{pb}^p\right)^2}} + \frac{\dot{u}_d }{u_d^2 + v^2} v^2 - \frac{u_dv}{u_d^2 + v^2}\Big(X(u, u_c) r - Y(u,u_c)v_c\Big)\nonumber\\ 
    &\hphantom{{}=}-\frac{\Delta}{\Delta^2 + \left(y_{pb}^p\right)^2}\left(- \frac{1}{2}\left(U_{d,1} + U_{d,2}\right)\frac{y_{pb}^p}{\sqrt{\Delta^2 + \left(y_{pb}^p\right)^2}} - \kappa(\theta)\frac{k_\theta \left(x_{pb}^p\right)^2}{\sqrt{1 + \left(x_{pb}^p\right)^2}} + G_1(\cdot)\right)\nonumber\\
    &\hphantom{{}=}-\frac{y_{pb}^p v}{\Delta^2 + \left(y_{pb}^p\right)^2}\left[\frac{\partial \Delta}{\partial x_{pb}^p}\frac{k_\theta x_{pb}^p}{\sqrt{1 + \left(x_{pb}^p\right)^2}} + \frac{\partial \Delta}{\partial y_{pb}^p}\left(- \frac{1}{2}\left(U_{d,1} + U_{d,2}\right)\frac{y_{pb}^p}{\sqrt{\Delta^2 + \left(y_{pb}^p\right)^2}} + G_1(\cdot)\right)\right]\bigggg].\label{eq:appendix:proofs:boundedness_x1_x2:F}
\end{align}
Note how all terms with partial derivatives of $\Delta$ and $\dot{\theta}$ are cancelled due to skew-symmetry from the definition of the lookahead distance \eqref{EQ:NSB:TASKS:BARYCENTER:GUIDANCE:LOOKAHEAD}:
\begin{align}
    &\frac{\Delta}{x_{pb}^p} \dot{\theta}\kappa(\theta)y_{pb}^p - \frac{\Delta}{y_{pb}^p} \dot{\theta}\kappa(\theta)x_{pb}^p= \frac{x_{pb}^p}{\sqrt{1 + \left(x_{pb}^p\right)^2 + \left(y_{pb}^p\right)^2}}\dot{\theta}\kappa(\theta)y_{pb}^p - \frac{y_{pb}^p}{\sqrt{1 + \left(x_{pb}^p\right)^2 + \left(y_{pb}^p\right)^2}}\dot{\theta}\kappa(\theta)x_{pb}^p = 0.
\end{align}
Furthermore, from \eqref{eq:appendix:proofs:boundedness_x1_x2:F} it can be seen that the function $F(\cdot)$ may be upper bounded by the following inequality
\begin{equation}
    |F(\cdot)| \leq F_2(\tilde{\mathbf{X}}_1, \tilde{\mathbf{X}}_2, \Delta, \theta, u_d, \dot{u}_d, v, v_c, u_c, r)v^2 + F_1(\tilde{\mathbf{X}}_1, \tilde{\mathbf{X}}_2, \Delta, \theta, u_d, \dot{u}_d, v, v_c, u_c, r) |v|,
\end{equation}
where $F_{1,2}(\cdot)$ are positive functions with 
\begin{equation}
    F_2(\mathbf{0}, \mathbf{0}, \Delta, \theta, u_d, \dot{u}_d, v, v_c, u_v) = 0.
\end{equation}
Consequently, the term $r_d v$ may be upper bounded as:
\begin{align}
    r_{d,i} v_i &\leq |v_i| \left|\left[ 1 + \frac{\Delta x_{pb}^p}{\Delta^2 + \left(y_{pb}^p\right)^2}\right]\right| |\kappa(\theta)| \frac{1}{2}\left(|U_i| + |U_j|\right) + |F(\cdot)| -\frac{u_{d,i}}{u_{d,i}^2 + v_i^2} Y(u_i, u_c)v_i^2\\
    &\leq |v_i| \left|\left[ 1 + \frac{\Delta x_{pb}^p}{\Delta^2 + \left(y_{pb}^p\right)^2}\right]\right| |\kappa(\theta)| \frac{1}{2}\left(|u_i| + |v_i| + |u_j|+ |v_j| \right) + |F(\cdot)| -\frac{u_{d,i}}{u_{d,i}^2 + v_i^2} Y(u_i, u_c)v_i^2\\
    &\leq \frac{1}{2}\left|\left[ 1 + \frac{\Delta x_{pb}^p}{\Delta^2 + \left(y_{pb}^p\right)^2}\right]\right||\kappa(\theta)| v_i^2 -\frac{u_{d,i}}{u_{d,i}^2 + v_i^2} Y(u_i, u_c)v_i^2 + |F(\cdot)| \nonumber\\
    &\hphantom{{}=}+\frac{1}{2}\left|\left[ 1 + \frac{\Delta x_{pb}^p}{\Delta^2 + \left(y_{pb}^p\right)^2}\right]\right||\kappa(\theta)|\left(|u_i|  + |u_j|+ |v_j| \right) |v_i|\\
    &\leq |\kappa(\theta)| v_i^2 -\frac{u_{d,i}}{u_{d,i}^2 + v_i^2} Y(u_i, u_c)v_i^2 + |F(\cdot)| +|\kappa(\theta)|\left(|u_i| + |u_j|+ |v_j| \right) |v_i|,\label{eq:appendix:proofs:boundedness_x1_x2:bound_rdv}
\end{align}
where we have used the following boundedness property:
\begin{align}
    \left|\left[ 1 + \frac{\Delta x_{pb}^p}{\Delta^2 + \left(y_{pb}^p\right)^2}\right]\right| & \leq 2. \label{eq:appendix:proofs:boundedness_x1_x2:1_plus_delta_x_less_than_two}
\end{align}

\begin{remark}
To be able to have $|F(\cdot)|$ be upper bounded by a quadratic function of $v$, it is necessary to choose $\Delta$ dependent on $x_{pb}^p$ in \eqref{EQ:NSB:TASKS:BARYCENTER:GUIDANCE:LOOKAHEAD}, but it is not required to be dependent on $y_{pb}^p$ for the conditions of \cref{LEM:NSB:ANALYSIS:BARYCENTER:BOUNDEDNESS_X1_X2} to hold. Contrary to \cite{Belleter2019}, where the lookahead distance had to depend on both $x_{pb}^p$ and $y_{pb}^p$. This, comes from our choice of defining the LOS guidance law in terms of absolute velocities, contrary to \cite{Belleter2019} where relative velocities are used. Thus, the LOS guidance law \eqref{eq:nsb:tasks:barycenter:guidance:guidance_law_heading} does not include the ocean current observer and the ocean current dependent term $g$ from \cite{Belleter2019}, as the ocean current compensation is instead handled by the adaptive autopilots \eqref{eq:nsb:analysis:heading_controller}-\eqref{eq:nsb:analysis:surge_controller}. Consequently, the term in \cite{Belleter2018Proofs} that could grow unbounded in $y_{b/p}$ near the manifold where $g = -(y_{b/p} + 1)$ is not present in \eqref{eq:appendix:proofs:boundedness_x1_x2:r_dv_before_extracting_F}, removing the dependency on $y_{pb}^p$ in \eqref{EQ:NSB:TASKS:BARYCENTER:GUIDANCE:LOOKAHEAD}. The proof can be found in Appendix B.4.
\end{remark}

Having established an upper bound of the term $r_d v$, we substitute \eqref{eq:appendix:proofs:boundedness_x1_x2:bound_rdv} into \eqref{eq:appendix:proofs:boundedness_x1_x2:v_dot_before_inserting_rd_v} to obtain the following bound for $\dot{V}$:
\begin{align}
    \dot{V} & \leq \left[|\kappa(\theta)|X(u_{d,i}, u_c) + Y(u_i, u_c)\right]v_i^2 + a_x\tilde{u}_i r_{d,i} v_i + X(u_{d,i}, u_c)\tilde{r}_i v_i+ a_x\tilde{u}_i \tilde{r}_i v_i\nonumber\\
    &\hphantom{{}=}+ Y(u_{d,i}, u_c)v_i^2 + a_y \tilde{u}_i v_i^2 - Y(u_{d,i}, u_c)v_c v_i - a_y \tilde{u}_i v_c v_i\nonumber\\
    &\hphantom{{}=} + X(u_{d,i}, u_c)\Big[|F(\cdot)| +|\kappa(\theta)|\left(|u_i| + |u_j|+ |v_j| \right)\Big] |v_i|.\label{eq:appendix:proofs:boundedness_x1_x2:v_dot_after_inserting_rd_v}
\end{align}
On the manifold where $(\tilde{\mathbf{X}}_1, \tilde{\mathbf{X}}_2) = \mathbf{0}$, the bound \eqref{eq:appendix:proofs:boundedness_x1_x2:v_dot_after_inserting_rd_v} simplifies to:
\begin{align}
    \dot{V} &\leq \left[|\kappa(\theta)|X_\text{max} + Y_\text{min}\right]v_i^2 \nonumber\\
    &\hphantom{{}=}+ X(u_{d,i}, u_c)\Big[F_1(\mathbf{0}, \mathbf{0}, \Delta, \theta, u_d, \dot{u}_d, v, v_c, u_v) +|\kappa(\theta)|\left(|u_i| + |u_j|+ |v_j| \right)\Big] |v_i|.\label{eq:appendix:proofs:boundedness_x1_x2:v_dot_on_manifold}
\end{align}
For sufficiently large $v_i$, we observe that the quadratic term is dominant. Consequently, boundedness of \eqref{eq:appendix:proofs:boundedness_x1_x2:v_dot_on_manifold} is guaranteed, since $\dot{V}$ is negative definite for sufficiently large $v_i$, that is:
\begin{equation}
    |\kappa(\theta)|X_\text{max} + Y_\text{min} < 0,
\end{equation}
which is satisfied whenever the maximum curvature satisfies \eqref{eq:nsb:analysis:barycenter:boundedness_X1_X2:max_curvature}. As $\dot{V}$ is negative definite for sufficiently large $v_i$, we can conclude that $V$ decreases for sufficiently large $v_i$. Furthermore, by extension, a decrease in $V$ implies a decrease in $v_i^2$ and again in $v_i$. Consequently, $v_i$ cannot increase above a certain threshold because this will make the quadratic term of \eqref{eq:appendix:proofs:boundedness_x1_x2:v_dot_on_manifold} dominant preventing further increase of $v_i$. Hence, $v_i$ is bounded near the manifold where  $(\tilde{\mathbf{X}}_1, \tilde{\mathbf{X}}_2) = \mathbf{0}$, concluding the proof of \cref{LEM:NSB:ANALYSIS:BARYCENTER:BOUNDEDNESS_X1_X2}.

\subsection{Proof of Lemma 3}
This proof follows along the lines of \cite[Lemma 3]{Belleter2018Proofs} but is extended to two vessels described by \eqref{eq:model:component_form} expressed in terms of absolute velocities, with the adaptive controllers \eqref{eq:nsb:analysis:heading_controller} - \eqref{eq:nsb:analysis:surge_controller}.

To prove boundedness of $v$ near the manifold $\tilde{\mathbf{X}}_2 = \mathbf{0}$, recall the sway dynamics \eqref{eq:nsb:analysis:barycenter:closed_loop:sway}:
\begin{align}
    \dot{v}_i &= X(u_{d,i} + \tilde{u}_i, u_c)r_{d_i} + X(u_{d,i} + \tilde{u}_i, u_c)\tilde{r}_i \nonumber\\
    &\hphantom{{}=}+ Y(u_{d,i} + \tilde{u}_i, u_c)v_i - Y(u_{d,i} + \tilde{u}_i, u_c)v_c.
\end{align}
We then consider the Lyapunov function candidate $V(v_i) = \frac{1}{2}v_i^2$, whose time derivative along the solutions of \eqref{eq:nsb:analysis:barycenter:closed_loop:sway} is:
\begin{align}
    \dot{V} &= v_i\dot{v}_i =  X(u_{d,i} + \tilde{u}_i, u_c)r_{d_i}v_i + X(u_{d,i} + \tilde{u}_i, u_c)\tilde{r}_i v_i \nonumber\\
    &\hphantom{{}=}+ Y(u_{d,i} + \tilde{u}_i, u_c)v_i^2 - Y(u_{d,i} + \tilde{u}_i, u_c)v_c v_i\\
    &\leq X(u_{d,i}, u_c)r_{d_i} v_i + a_x\tilde{u}_i r_{d,i} v_i + X(u_{d,i}, u_c)\tilde{r}_i v_i+ a_x\tilde{u}_i \tilde{r}_i v_i\nonumber\\
    &\hphantom{{}=}+ Y(u_{d,i}, u_c)v_i^2 + a_y \tilde{u}_i v_i^2 - Y(u_{d,i}, u_c)v_c v_i - a_y \tilde{u}_i v_c v_i.\label{eq:appendix:proofs:boundedness_x2:v_dot_before_inserting_rd_v}
\end{align}
Here, we have used the following properties of  $X(u, u_c)$ and $Y(u, u_c)$ from \eqref{eq:appendix:vessel_model_expressions:X} - \eqref{eq:appendix:vessel_model_expressions:Y}:
\begin{align}
    X(u, u_c) &= a_x u + b_x u_c + c_x \\
    Y(u, u_c) &= a_y u + b_y u_c + c_y.
\end{align}
Next, to find an upper bound of the term $r_{d,i} v_i$ in \eqref{eq:appendix:proofs:boundedness_x2:v_dot_before_inserting_rd_v}, we substitute the expression for $r_d$ from \eqref{eq:nsb:analysis:barycenter:r_d}, omitting subscripts for simplicity:
\begin{align}
    r_d v &= \kappa(\theta) \dot{\theta} v + \frac{\dot{u}_d }{u_d^2 + v^2} v^2 - \frac{u_d \dot{v}v}{u_d^2 + v^2}\nonumber\\
    &\hphantom{{}=}- \frac{v}{\Delta^2 + \left(y_{pb}^p\right)^2}\left[\Delta \dot{y}_{pb}^p - y_{pb}^p\left(\frac{\partial \Delta}{\partial x_{pb}^p}\dot{x}_{pb}^p + \frac{\partial \Delta}{\partial y_{pb}^p}\dot{y}_{pb}^p\right)\right] \\
    &= \kappa(\theta) v \left(\frac{1}{2}U_1\cos\left(\chi_1 - \gamma_p\right) + \frac{1}{2}U_2\cos\left(\chi_2 - \gamma_p\right) + \frac{k_\theta x_{pb}^p}{\sqrt{1 + \left(x_{pb}^p\right)^2}} \right) \nonumber\\
    &\hphantom{{}=} + \frac{\dot{u}_d }{u_d^2 + v^2} v^2 - \frac{u_dv}{u_d^2 + v^2}\Big(X(u, u_c) r + Y(u,u_c)v - Y(u,u_c)v_c\Big)\nonumber\\
    &\hphantom{{}=} -\frac{\Delta v}{\Delta^2 + \left(y_{pb}^p\right)^2}\left(- \frac{1}{2}\left(U_{d,1} + U_{d,2}\right)\frac{y_{pb}^p}{\sqrt{\Delta^2 + \left(y_{pb}^p\right)^2}} - \kappa(\theta)\dot{\theta}x_{pb}^p + G_1(\cdot)\right)\nonumber\\
    &\hphantom{{}=} +\frac{y_{pb}^p v}{\Delta^2 + \left(y_{pb}^p\right)^2}\bigggg[ \frac{\partial \Delta}{\partial x_{pb}^p}\left(- \frac{k_\theta x_{pb}^p}{\sqrt{1 + \left(x_{pb}^p\right)^2}} + \dot{\theta}\kappa(\theta)y_{pb}^p\right)\nonumber\\
    &\hphantom{{}=} + \frac{\partial \Delta}{\partial y_{pb}^p}\left(- \frac{1}{2}\left(U_{d,1} + U_{d,2}\right)\frac{y_{pb}^p}{\sqrt{\Delta^2 + \left(y_{pb}^p\right)^2}} - \kappa(\theta)\dot{\theta}x_{pb}^p + G_1(\cdot)\right)\bigggg].\label{eq:appendix:proofs:boundedness_x2:r_dv_before_extracting_H}
\end{align}
Now, introduce a term $H(\tilde{\mathbf{X}}_1, \tilde{\mathbf{X}}_2, \Delta, \theta, u_d, \dot{u}_d, v, v_c, u_c, r)$ to collect all terms that have less than quadratic growth in $v$ and/or vanish when $\tilde{\mathbf{X}}_2 = 0$.
\begin{align}
    r_d v &= \kappa(\theta) v \left[1 + \frac{x_{pb}^p}{\Delta^2 + \left(y_{pb}^p\right)^2}\right]\left(\frac{1}{2}U_1\cos\left(\chi_1 - \gamma_p\right) + \frac{1}{2}U_2\cos\left(\chi_2 - \gamma_p\right)\right)\nonumber\\
    &\hphantom{{}=} -\frac{\Delta v}{\Delta^2 + \left(y_{pb}^p\right)^2}\left(- \frac{1}{2}\left(U_{d,1} + U_{d,2}\right)\frac{y_{pb}^p}{\sqrt{\Delta^2 + \left(y_{pb}^p\right)^2}} + G_1(\cdot)\right)\nonumber\\
    &\hphantom{{}=} -\frac{\Delta y_{pb}^p}{\Delta^2 + \left(y_{pb}^p\right)^2} \frac{\partial \Delta}{\partial y_{pb}^p}\left(- \frac{1}{2}\left(U_{d,1} + U_{d,2}\right)\frac{y_{pb}^p}{\sqrt{\Delta^2 + \left(y_{pb}^p\right)^2}} + G_1(\cdot)\right)\nonumber\\
    &\hphantom{{}=} -\frac{u_d}{u_d^2 + v^2} Y(u, u_c)v^2 + H(\cdot),
\end{align}
where
\begin{align}
    H(\cdot) &= v\bigggg[ \kappa(\theta)\frac{k_\theta x_{pb}^p}{\sqrt{1 + \left(x_{pb}^p\right)^2}} + \frac{\dot{u}_d v}{u_d^2 + v^2} - \frac{u_dv}{u_d^2 + v^2}\Big(X(u, u_c) r - Y(u,u_c)v_c\Big)\nonumber\\
    &\hphantom{{}=} + \frac{\Delta}{\Delta^2 + \left(y_{pb}^p\right)^2}\kappa(\theta) \frac{k_\theta \left(x_{pb}^p\right)^2}{\sqrt{1 + \left(x_{pb}^p\right)^2}} - \frac{y_{pb}^p}{\Delta^2 + \left(y_{pb}^p\right)^2}\frac{\partial \Delta}{\partial x_{pb}^p} \frac{k_\theta x_{pb}^p}{\sqrt{1 + \left(x_{pb}^p\right)^2}}\bigggg].
\end{align}
Similarly as in the proof of \cref{LEM:NSB:ANALYSIS:BARYCENTER:BOUNDEDNESS_X1_X2}, all terms with partial derivatives of $\Delta$ and $\dot{\theta}$ are cancelled due to skew-symmetry. Consequently, the term $r_d v$ may be upper bounded as:
\begin{align}
    r_{d,i} v_i &\leq |v_i| \left|\left[ 1 + \frac{\Delta x_{pb}^p}{\Delta^2 + \left(y_{pb}^p\right)^2}\right]\right| |\kappa(\theta)| \frac{1}{2}\left(|U_i| + |U_j|\right) + |v_i| \left|\frac{1}{\Delta}\right|\frac{1}{2}\left(|U_{d,i}| + |U_{d,j}| + |G_1(\cdot)|\right)\nonumber\\
    &\hphantom{{}=} + |v_i|\left|\frac{y_{pb}^p}{\Delta^2 + \left(y_{pb}^p\right)^2}\right|\frac{1}{2}\left(|U_{d,i}| + |U_{d,j}| + |G_1(\cdot)|\right) -\frac{u_{d,i}}{u_{d,i}^2 + v_i^2} Y(u_i, u_c)v_i^2 + |H(\cdot).\label{eq:appendix:proofs:boundedness_x2:first_bound_r_dv}
\end{align}
To further restrict the upper bound on $r_d v$, we will utilize the following inequalities:
\begin{align}
    \left|\frac{y_{pb}^p}{\Delta^2 + \left(y_{pb}^p\right)^2}\right| &\leq \left|\frac{1}{\Delta}\right| \label{eq:appendix:proofs:boundedness_x2:inequality_1}\\
    |U_{d,i}| &\leq 4 (|u_i| + |v_i| + |\tilde{u}_i|.\label{eq:appendix:proofs:boundedness_x2:inequality_2}
\end{align}
Substituting \eqref{eq:appendix:proofs:boundedness_x2:inequality_1} - \eqref{eq:appendix:proofs:boundedness_x2:inequality_2} into \eqref{eq:appendix:proofs:boundedness_x2:first_bound_r_dv} we obtain:
\begin{align}
    r_{d,i} v_i &\leq \frac{1}{2} v_i^2 \left|\left[ 1 + \frac{\Delta x_{pb}^p}{\Delta^2 + \left(y_{pb}^p\right)^2}\right]\right| |\kappa(\theta)| + \frac{1}{2} |v_i| \left|\left[ 1 + \frac{\Delta x_{pb}^p}{\Delta^2 + \left(y_{pb}^p\right)^2}\right]\right| |\kappa(\theta)| |u_i|\nonumber\\
    &\hphantom{{}=} + \frac{1}{2}|v_i| \left|\left[ 1 + \frac{\Delta x_{pb}^p}{\Delta^2 + \left(y_{pb}^p\right)^2}\right]\right| |\kappa(\theta)| |U_j| -\frac{u_{d,i}}{u_{d,i}^2 + v_i^2} Y(u_i, u_c)v_i^2 + |H(\cdot)| \nonumber\\
    &\hphantom{{}=} + |v_i| \left|\frac{2}{\Delta}\right|\left(2\left(|u_i| + |v_i|\right) + \frac{1}{2}|\tilde{u}_i| + 2\left(|u_j| + |v_j|\right) + \frac{1}{2}|\tilde{u}_j| + |G_1(\cdot)| \right).\label{eq:appendix:proofs:boundedness_x2:bound_r_dv_before_Phi}
\end{align}
Then, we substitute \eqref{eq:appendix:proofs:boundedness_x1_x2:1_plus_delta_x_less_than_two} into \eqref{eq:appendix:proofs:boundedness_x2:bound_r_dv_before_Phi} and introduce the function $\Phi(\cdot)$ to collect the remaining terms that have less than quadratic growth in $v_i$ and/or vanish when $\tilde{\mathbf{X}}_2 = 0$:
\begin{align}
    r_{d,i} v_i &\leq v_i^2 \left[\frac{1}{2} |\kappa(\theta)| \left|\left[ 1 + \frac{\Delta x_{pb}^p}{\Delta^2 + \left(y_{pb}^p\right)^2}\right]\right| + \frac{4}{\Delta}\right] - \frac{u_{d,i}}{u_{d,i}^2 + v_i^2} Y(u_i, u_c)v_i^2 + \Phi(\cdot)\\
    &\leq v_i^2\left[|\kappa(\theta)| + \frac{4}{\Delta}\right] - \frac{u_{d,i}}{u_{d,i}^2 + v_i^2} Y(u_i, u_c)v_i^2 + \Phi(\cdot).\label{eq:appendix:proofs:boundedness_x2:bound_r_dv_with_Phi}
\end{align}
From the definitions of $\Phi(\cdot)$ and $H(\cdot)$ we can conclude the existence of some positive bounded functions \\$F_{0,2}(\tilde{\mathbf{X}}_1, \tilde{\mathbf{X}}_2, \Delta, \theta, u_d, \dot{u}_d, v, v_c, u_c, r)$, such that:
\begin{gather}
    \Phi(\cdot) \leq F_2(\cdot) v_i ^2 + F_2(\cdot) |v_i| + F_0(\cdot)\\
    F_2(\tilde{\mathbf{X}}_1, \mathbf{0}, \Delta, \theta, u_d, \dot{u}_d, v, v_c, u_c, r) = 0.
\end{gather}
Having established an upper bound of the term $r_d v$, we substitute \eqref{eq:appendix:proofs:boundedness_x2:bound_r_dv_with_Phi} into \eqref{eq:appendix:proofs:boundedness_x2:v_dot_before_inserting_rd_v} to obtain the following bound for $\dot{V}$:
\begin{align}
    \dot{V} &\leq X(u_{d,i}, u_c)\left(\left[|\kappa(\theta)| + \frac{4}{\Delta}\right]v_i^2 + \Phi(\cdot)\right) + a_x\tilde{u}_i r_{d,i} v_i + X(u_{d,i}, u_c)\tilde{r}_i v_i+ a_x\tilde{u}_i \tilde{r}_i v_i\nonumber\\
    &\hphantom{{}=}+ Y(u_{d,i}, u_c)v_i^2 + a_y \tilde{u}_i v_i^2 - Y(u_{d,i}, u_c)v_c v_i - a_y \tilde{u}_i v_c v_i - \frac{u_{d,i}}{u_{d,i}^2 + v_i^2} Y(u_i, u_c)v_i^2\\
    &\leq \left(X(u_{d,i}, u_c)\left[|\kappa(\theta)| + \frac{4}{\Delta}\right]-Y(u_{d,i}, u_c)\right)v_i^2 + a_x\tilde{u}_i r_{d,i} v_i + X(u_{d,i}, u_c)\tilde{r}_i v_i+ a_x\tilde{u}_i \tilde{r}_i v_i\nonumber\\
    &\hphantom{{}=} + a_y \tilde{u}_i v_i^2 - Y(u_{d,i}, u_c)v_c v_i - a_y \tilde{u}_i v_c v_i + X(u_{d,i}, u_c) \Phi(\cdot). \label{eq:appendix:proofs:boundedness_x2:v_dot_after_inserting_rd_v}
\end{align}
On the manifold where $\tilde{\mathbf{X}}_2 = 0$, the bound \eqref{eq:appendix:proofs:boundedness_x2:v_dot_after_inserting_rd_v} simplifies to:
\begin{align}
    \dot{V} &\leq \left(X_\text{max}\left[\kappa_\text{max} + \frac{4}{\Delta}\right]-Y_\text{min}\right)v_i^2\nonumber\\
    &\hphantom{{}=} +X(u_{d,i}, u_c)\Big(F_1(\tilde{\mathbf{X}}_1, \mathbf{0}, \Delta, \theta, u_d, \dot{u}_d, v, v_c, u_c, r)|v_i| + F_0(\tilde{\mathbf{X}}_1, \mathbf{0}, \Delta, \theta, u_d, \dot{u}_d, v, v_c, u_c, r)\Big). \label{eq:appendix:proofs:boundedness_x2:v_dot_on_manifold}
\end{align}
For sufficiently large $v_i$, we observe that the quadratic term is dominant. Consequently, boundedness of \eqref{eq:appendix:proofs:boundedness_x2:v_dot_on_manifold} is guaranteed, since $\dot{V}$ is negative definite for sufficiently large $v_i$, whenever the following condition holds:
\begin{equation}
    X_\text{max}\left[\kappa_\text{max} + \frac{4}{\Delta}\right]-Y_\text{min} < 0.
\end{equation}
Using the lookahead distance definition in \eqref{EQ:NSB:TASKS:BARYCENTER:GUIDANCE:LOOKAHEAD}, we conclude that this condition is fulfilled whenever the conditions of \cref{LEM:NSB:ANALYSIS:BARYCENTER:BOUNDEDNESS_X2} is fulfilled:
\begin{equation}
    \mu > \frac{4 X_\text{max}}{Y_\text{min} -\kappa_\text{max} X_\text{max}}.
\end{equation}
Note how this is well defined as the denominator is nonzero and positive whenever the condition from \cref{LEM:NSB:ANALYSIS:BARYCENTER:BOUNDEDNESS_X1_X2} is satisfied. As $\dot{V}$ is negative definite for sufficiently large mangnitudes of $v_i$ near the manifold $\tilde{\mathbf{X}}_2 = 0$, the Lyapunov function candidate $V(v_i) = \frac{1}{2}v_i^2$ must decrease for sufficiently large $v_i$, and by extension, the magnitude of $v_i$ must decrease for sufficiently large $v_i$. Hence, $v_i$ is bounded near the manifold where $\tilde{\mathbf{X}}_2 = 0$ if the constant part of the lookahead distance $\mu$ is chosen accordingly to the condition in \cref{LEM:NSB:ANALYSIS:BARYCENTER:BOUNDEDNESS_X2}.

\subsection{Proof that the lookahead distance can be chosen independently of $y_{pb}^p$}
\label{subsec:appendix:proofs:boundedness_x1_x2:lookahead}

In this section we will prove that the lookahead distance $\Delta$ in \eqref{EQ:NSB:TASKS:BARYCENTER:GUIDANCE:LOOKAHEAD} is required to be dependent on $x_{pb}^p$, but that it can be chosen independently of $y_{pb}^p$. This is in contrast to \cite{Belleter2019} where the lookahead distance was required to be a function of both $x_{b/p}$ and  $y_{b/p}$.

First, we will investigate the consequences of choosing the lookahead distance in \eqref{EQ:NSB:TASKS:BARYCENTER:GUIDANCE:LOOKAHEAD} independently of both $x_{pb}^p$ and $y_{pb}^p$. In this case, the expression for $r_d v$ in \eqref{eq:appendix:proofs:boundedness_x1_x2:r_dv_before_extracting_F} reduce to
\begin{align}
    r_d v &= \kappa(\theta) v \left(\frac{1}{2}U_1\cos\left(\chi_1 - \gamma_p\right) + \frac{1}{2}U_2\cos\left(\chi_2 - \gamma_p\right) + \frac{k_\theta x_{pb}^p}{\sqrt{1 + \left(x_{pb}^p\right)^2}} \right) \nonumber\\
    &\hphantom{{}=} + \frac{\dot{u}_d }{u_d^2 + v^2} v^2 - \frac{u_dv}{u_d^2 + v^2}\Big(X(u, u_c) r + Y(u,u_c)v - Y(u,u_c)v_c\Big)\nonumber\\
    &\hphantom{{}=} -\frac{\Delta v}{\Delta^2 + \left(y_{pb}^p\right)^2}\left(- \frac{1}{2}\left(U_{d,1} + U_{d,2}\right)\frac{y_{pb}^p}{\sqrt{\Delta^2 + \left(y_{pb}^p\right)^2}} - \kappa(\theta)\dot{\theta}x_{pb}^p + G_1(\cdot)\right),
\end{align}
as the partial derivatives of $\Delta$ with respect to $x_{pb}^p$ and $y_{pb}^p$ will be zero. Now, we want to take a closer look at the term: 
\begin{equation}
    v \frac{\Delta\kappa(\theta)\dot{\theta}x_{pb}^p}{\Delta^2 + \left(y_{pb}^p\right)^2}.
\end{equation}
Focusing on the $i$\textsuperscript{th} vessel. Inserting the expression for $\dot{\theta}$, and isolating the part independent of $x_{pb}^p$ and depending on the part $U_{d,i}$ we get:
\begin{equation}
    v_i \frac{\Delta\kappa(\theta)x_{pb}^p}{\Delta^2 + \left(y_{pb}^p\right)^2} \frac{1}{2}U_i\cos\left(\chi_i - \gamma_p\right). \label{eq::appendix:proofs:boundedness_x1_x2:lookahead:independent_of_x_and_y_before_inserting_for_U_i}
\end{equation}
From the definition $U_i = \sqrt{u_i^2 + v_i^2}$, the growth of $U_i$ is proportional with $v_i$. Thus, the growth of \eqref{eq::appendix:proofs:boundedness_x1_x2:lookahead:independent_of_x_and_y_before_inserting_for_U_i} can be represented by:
\begin{equation}
    v_i^2 \frac{\Delta\kappa(\theta)x_{pb}^p}{\Delta^2 + \left(y_{pb}^p\right)^2}. \label{eq::appendix:proofs:boundedness_x1_x2:lookahead:independent_of_x_and_y_after_inserting_for_U_i}
\end{equation}
If $\Delta$ is chosen independently of $x_{pb}^p$ it is clear how this term will go to infinity for large values of $x_{pb}^p$. However, if $\Delta$ is chosen to grow at least linearly with $x_{pb}^p$, then \eqref{eq::appendix:proofs:boundedness_x1_x2:lookahead:independent_of_x_and_y_after_inserting_for_U_i} will not diverge as
\begin{equation}
    \lim_{x_{pb}^p \to\infty} \frac{\kappa(\theta)\left(x_{pb}^p\right)^2}{\left(x_{pb}^p\right)^2 + \left(y_{pb}^p\right)^2} = 1,
\end{equation}
or converge to zero if $\Delta$ grows more than linear in $x_{pb}^p$. Thus, if $\Delta$ is chosen independently of $x_{pb}^p$, $r_d v$ could grow unbounded with $x_{pb}^p$ making it impossible to show boundedness of the sway dynamics. 

Next, we will show that the lookahead distance can be chosen independently of $y_{pb}^p$, which is in contrast to \cite{Belleter2018Proofs} where it was shown that the lookahead distance also needed to be chosen dependent on $y_{b/p}$. Lets consider the case where the lookahead distance in \eqref{EQ:NSB:TASKS:BARYCENTER:GUIDANCE:LOOKAHEAD} depend on $x_{pb}^p$, but is independent of $y_{pb}^p$. Then, the expression for $r_d v$ in \eqref{eq:appendix:proofs:boundedness_x1_x2:r_dv_before_extracting_F} reduces to:
\begin{align}
    r_d v  &= \kappa(\theta) v \left(\frac{1}{2}U_1\cos\left(\chi_1 - \gamma_p\right) + \frac{1}{2}U_2\cos\left(\chi_2 - \gamma_p\right) + \frac{k_\theta x_{pb}^p}{\sqrt{1 + \left(x_{pb}^p\right)^2}} \right) \nonumber\\
    &\hphantom{{}=} + \frac{\dot{u}_d }{u_d^2 + v^2} v^2 - \frac{u_dv}{u_d^2 + v^2}\Big(X(u, u_c) r + Y(u,u_c)v - Y(u,u_c)v_c\Big)\nonumber\\
    &\hphantom{{}=} -\frac{\Delta v}{\Delta^2 + \left(y_{pb}^p\right)^2}\left(- \frac{1}{2}\left(U_{d,1} + U_{d,2}\right)\frac{y_{pb}^p}{\sqrt{\Delta^2 + \left(y_{pb}^p\right)^2}} - \kappa(\theta)\dot{\theta}x_{pb}^p + G_1(\cdot)\right)\nonumber\\
    &\hphantom{{}=} +\frac{y_{pb}^p v}{\Delta^2 + \left(y_{pb}^p\right)^2}\bigggg[ \frac{\partial \Delta}{\partial x_{pb}^p}\left(- \frac{k_\theta x_{pb}^p}{\sqrt{1 + \left(x_{pb}^p\right)^2}} + \dot{\theta}\kappa(\theta)y_{pb}^p\right)\bigggg].
\end{align}
Using the same approach as earlier, we isolate the term
\begin{equation}
    v \frac{\partial \Delta}{\partial x_{pb}^p}\kappa(\theta)\dot{\theta} \frac{\left(y_{pb}^p\right)^2}{\Delta^2 + \left(y_{pb}^p\right)^2},
\end{equation}
which with the same reasoning as before reduces to:
\begin{equation}
    v_i^2 \frac{\partial \Delta}{\partial x_{pb}^p}\kappa(\theta) \frac{\left(y_{pb}^p\right)^2}{\Delta^2 + \left(y_{pb}^p\right)^2}.\label{eq::appendix:proofs:boundedness_x1_x2:lookahead:independent_of_x}
\end{equation}
It is clear how this term is bounded for all values of $y_{pb}^p$ even when $\Delta$ depends only on $x_{pb}^p$. When comparing this to the term from \cite[Eq. (83)]{Belleter2018Proofs}
\begin{equation}
    \frac{v_r^2}{C_r} \frac{\partial \Delta}{\partial x_{b/p}}\kappa(\theta)\frac{\Delta y_{b/p}(y_{b/p} + g)}{\left(\Delta^2 + \left(y_{b/p} + g\right)^2\right)^{3/2}}, \label{eq:appendix:proofs:boundedness_x1_x2:lookahead:belleter2019Proofs_eq_83}
\end{equation}
it is clear how \eqref{eq:appendix:proofs:boundedness_x1_x2:lookahead:belleter2019Proofs_eq_83} can grow unbounded in $y_{b/p}$ near the manifold where $g = -(y_{b/p} + 1)$ as the term, near the manifold, reduces to
\begin{equation}
    \frac{v_r^2}{C_r} \frac{\partial \Delta}{\partial x_{b/p}}\kappa(\theta)\frac{\Delta y_{b/p}}{\left(\Delta^2 + 1\right)^{3/2}},
\end{equation}
which can only be bounded by choosing $\Delta$ dependent on $y_{b/p}$. 
However, from our choice of expressing the LOS guidance law in terms of absolute velocities, combined with a set of adaptive autopilots for ocean current compensation, the ocean current dependent term $g$ is not present in \eqref{eq:nsb:tasks:barycenter:guidance:guidance_law_heading}. Consequently, there are no manifold where \eqref{eq::appendix:proofs:boundedness_x1_x2:lookahead:independent_of_x} can grow unbounded in $y_{pb}^p$, implying the lookahead distance \eqref{EQ:NSB:TASKS:BARYCENTER:GUIDANCE:LOOKAHEAD} can be chosen independently of $y_{pb}^p$. 
\fi

\end{document}